\title{The discrete strategy improvement algorithm for parity games and complexity measures for
directed graphs}
\author{Felix Canavoi \qquad\quad Erich Gr\"{a}del \qquad\quad Roman Rabinovich
\institute{RWTH Aachen University\\ Germany}
\email{\{canavoi,graedel,rabinovich\}@logic.rwth-aachen.de}
}
\newtheorem{theorem}{Theorem}
\newtheorem{lemma}[theorem]{Lemma}
\newtheorem{remark}[theorem]{Remark}
\newtheorem{corollary}[theorem]{Corollary}
\newdimen\arrowsize
\newlength{\arrowlength}
\newlength{\arrowangle}
\newlength{\arrowthickness}
\tikzstyle{vertex}=[draw,shape=circle, inner sep=0pt,minimum size=16]
\tikzstyle{dvertex}=[draw,dashed,shape=circle, inner sep=0pt,minimum size=16]
\tikzstyle{smallvertex}=[draw,shape=circle, inner sep=0pt,minimum size=8mm]
\tikzstyle{dashedvertex}=[draw,dashed,shape=circle, inner sep=0pt,%
\tikzstyle{smallcircle}=[circle,inner sep=1.5,fill=white, draw=black]
\tikzstyle{point}=[circle,inner sep=1,fill=black, draw=black]
\tikzstyle{path}=[-slim,thin,rounded corners] 
\tikzstyle{path1}=[-slim,thin,decorate,%
\tikzstyle{brace}=[thin,decorate,decoration=brace]
\tikzstyle{ie}=[thin,dashed,gray]
\tikzstyle{cop}=[star, star points=6, star point height=3.2,inner sep=3,minimum
\tikzstyle{ocop}=[star, star points=6, star point height=3.2,inner sep=3,minimum
\tikzstyle{robber}=[circle, inner sep=2,minimum size =2mm,semithick, draw=black]
\tikzstyle{slim loop}=[-slim,draw,to path={ .. controls +(60:1.5) and %
\newcommand{\N}{\mathbb{N}}
\newcommand{\R}{\mathbb{R}}
\newcommand{\F}{\mathbb{F}}
\newcommand{\A}{\mathcal{A}}
\DeclareMathAlphabet{\mathsc}{OT1}{cmr}{m}{sc}
\newcommand{\ptime}{\ensuremath{\mathsc{Ptime}}\xspace}
\newcommand{\np}{\ensuremath{\mathsc{NP}}\xspace}
\newcommand{\0}{\emptyset}
\newcommand{\dunion}{\mathbin{\dot{\cup}}}
\newcommand{\done}{Done}
\newcommand{\tw}{treewidth\xspace}
\newcommand{\Tw}{Treewidth\xspace}
\DeclareMathOperator{\twg}{\mathrm{twG}}
\DeclareMathOperator{\twm}{tw}
\newcommand{\dagw}{DAG{}-width\xspace}
\DeclareMathOperator{\dagwg}{\mathrm{dagwG}}
\DeclareMathOperator{\dagwm}{dagw}
\newcommand{\cw}{cliquewidth\xspace}
\newcommand{\Cw}{Cliquewidth\xspace}
\DeclareMathOperator{\cwm}{cw}
\newcommand{\dpw}{directed pathwidth\xspace}
\newcommand{\Dpw}{Directed pathwidth\xspace}
\DeclareMathOperator{\dpwm}{dpw}
\newcommand{\kw}{Kelly{}-width\xspace}
\newcommand{\Kw}{Kelly{}-width\xspace}
\DeclareMathOperator{\kwg}{\mathrm{KwG}}
\DeclareMathOperator{\kwm}{Kw}
\newcommand{\ent}{entanglement\xspace}
\newcommand{\Ent}{Entanglement\xspace}
\DeclareMathOperator{\entg}{\mathrm{entG}}
\DeclareMathOperator{\entm}{ent}
\newcommand{\ie}{i.e.\xspace}
\newcommand{\eg}{e.g.\xspace}
\DeclareMathOperator{\Reach}{Reach}
\newcommand{\@abbrev}[3]{
  \def\c@a@def##1{
      \if ##1.
        \relax
      \else
        \@ifdefinable{\@nameuse{#1##1}}{\@namedef{#1##1}{#2##1}}
        \expandafter\c@a@def
      \fi
    }
  \c@a@def #3.
}
\newcommand{\olcalG}{\overline{\calG}}
\begin{document}
\maketitle

\begin{abstract}
For some time the discrete strategy improvement algorithm due to
Jurdzi\'nski and V\"oge  had been considered as a candidate
for solving parity games in polynomial time. 
However, it has recently been proved by Oliver Friedmann that
the strategy improvement algorithm requires super-{}polynomially
many iteration steps, for all popular local improvements rules, 
including switch-all (also with Fearnley's snare memorisation), switch-best,
random-facet, random-edge, switch-half, least-{}recently-{}con\-si\-dered, and
Zadeh's Pivoting rule.

We analyse the examples provided by Friedmann in terms
of complexity measures for directed graphs such as \tw,
\dagw, \kw, \ent, \dpw, and \cw. 
It is known that for every class of parity games on which one of these
parameters is bounded,  the winning regions can be efficiently computed.
It turns out that with  respect to almost all of these measures, the complexity
of
Friedmann's  counterexamples is bounded, and indeed in most cases by very small
numbers. This analysis strengthens in some sense Friedmann's results and shows
that the discrete strategy improvement algorithm is even more limited than one
might have thought.   Not only does it require super-{}polynomial running time
in the
general case,  where the problem of polynomial-time solvability is open, it
even has super-{}polynomial lower time bounds on natural classes of parity games
on
which efficient algorithms are known.

\end{abstract}

\section{Introduction}

Parity games are a family of infinite two-player games on directed
graphs. They are important for several reasons. 
Many classes of games arising in practical applications admit
reductions to parity games (over larger game graphs). This is the
 case for games modelling reactive systems, with winning conditions
 specified in some temporal logic or in monadic second-order logic
 over infinite paths (S1S), for Muller games, but also for games with
 partial information appearing in the synthesis of distributed
 controllers. Further, parity games arise as the model-checking games for 
 \emph{fixed-point logics} such as the modal $\mu$-calculus or LFP, the
extension of
  first-order logic by least and greatest fixed-points. Conversely,
 winning regions of parity games (with a bounded number of priorities)
 are definable in both LFP and the $\mu$-calculus.
Parity games are also of crucial importance in the analysis of
structural  properties of fixed-point logics.

From an algorithmic point of view parity games are highly intriguing as well. 
It is an immediate consequence of the \emph{positional determinacy}
of parity games, that their winning regions
can be decided in NP $\cap$ Co-NP.
In fact, it was proved in \cite{Jurdzinski98}
that the problem is in 
$\textrm{UP} \cap \textrm{Co-UP}$, where UP denotes the class of NP-problems
with unique witnesses.
The best known deterministic algorithm 
has complexity $n^{O(\sqrt n)}$ \cite{JurdzinskiPatZwi06}.
For parity games with a number $d$ of 
priorities the progress measure lifting
algorithm by Jurdzi\'nski~\cite{Jurdzinski00} computes winning regions in time
$O(dm\cdot (2n/(d/2))^{d/2})=O(n^{d/2+O(1)})$, where 
$m$ is the number of edges, giving a
polynomial-time algorithm when $d$ is bounded.
The two approaches can be combined to achieve a worst-case
running time of $O(n^{d/3+O(1)})$ for solving parity games
with $d$ priorities, with $d=\sqrt n$ (see \cite[Chapter 3]{AptGra11}.

Although the question whether parity games are in
general solvable in \ptime  is still open,
there are efficient algorithms that solve parity games in special cases,
where the structural complexity of the underlying directed graphs,
measured by  numerical graph parameters, is low.
These include parity games of bounded \tw
\cite{Obdrzalek03}, bounded entanglement \cite{BerwangerGra05,BerwangerGraKaiRab12}, 
bounded \dagw \cite{BerwangerDawHunKreObd12},
bounded \kw \cite{HunterKre08}, or bounded \cw
\cite{Obdrzalek07}.

\medskip One algorithm that, for a long time, had been considered
as  a candidate for solving parity games in polynomial time
is the \emph{discrete strategy improvement algorithm} by Jurdzi\'{n}ski and
V\"{o}ge~\cite{JurdzinskiVoe00}. The basic idea behind the algorithm is to take
an
arbitrary
initial strategy for Player~$0$ and improve it step by step until an optimal
strategy is found. The algorithm is parametrized by an
\emph{improvement rule}. Indeed,  there are many possibilities
to improve the current strategy at any iteration step, and the
improvement rule determines the choice that is made.
Popular improvement rules are switch-all, switch-best, random-facet,
random-edge, switch-half and Zadeh's Pivoting rule. 
Although it is open whether there is an
improvement rule that results in a polynomial worst-case runtime of the
strategy improvement algorithm, Friedmann~\cite{FriedmannPhD} was
able to show
that there are super-{}polynomial lower bounds for all popular 
improvement rules mentioned above. For each of these rules, Friedmann 
constructed a family of parity games
on which the strategy improvement algorithm requires super-{}polynomial
running time.

In this paper we analyse the examples provided by Friedmann in terms
of complexity measures for directed graphs. It turns out that with 
respect to most of these measures, the complexity of Friedmann's counterexamples
is bounded, and indeed in most cases by very small numbers.
This analysis strengthens in some sense Friedmann's results and shows that the
discrete strategy improvement algorithm is even more limited than one might
have thought.  
Not only does it require super-{}polynomial running time in the general case, 
where the problem of polynomial-time solvability is open, it
even has super-{}polynomial lower time bounds on natural classes of parity games
on
which efficient algorithms are known.

\section{The strategy improvement algorithm}\label{sec_strat_improv_alg}

We assume that the reader is familiar with basic notions and terminology
on parity games. We shall now briefly discuss the discrete strategy improvement
algorithm and the different improvement rules that parametrize it.
For the purpose of this paper a precise understanding of the
algorithm is not needed.
The idea of the strategy improvement algorithm is that one can
compute an optimal strategy of a player by starting with an arbitrary intitial
strategy
and improve it step by step, depending on a discrete valuation of plays and
strategies, and on a rule that governs the choices of local changes
(switches) of the current strategy.  

It is well-known that parity games are determined by positional strategies,
\ie strategies which at each position just select one of the outgoing edges,
independent of the history of a play.
The discrete valuation defined by Jurdzi\'nski and V\"oge \cite{JurdzinskiVoe00}
measures how good a play is for Player~0 in a more refined way than
just winning or losing. Given a current strategy one can then,
at each position of Player~0, consider the possible  local changes,
\ie the switches of the outgoing edges, and select a
locally best posibility.
Rules that describe how to combine such switches in one improvement step are
called \emph{improvement rules} and parametrise the algorithm.

The \emph{switch-all} or \emph{locally optimizing} rule~\cite{JurdzinskiVoe00}
regards each vertex independently and performs the best possible switch for
every vertex. In other words, for every vertex, it computes the best
improvement of the strategy \emph{at that vertex} assuming that the strategy
remains
unchanged at other vertices. However, the switch is done simultaneously at each
vertex. The \emph{switch-{}best} or the \emph{globally optimising
rule} takes cross-effects of improving switches into account and
applies in every iteration step a best possible combination of switches. 

The \emph{random-edge} rule applies a single improving switch at some vertex
chosen randomly and the improvement rule \emph{switch-half}
applies an improving switch at every vertex with probability~$1/2$. The 
\emph{random-{}facet} rule chooses randomly an edge~$e$ 
leaving a Player~0 vertex and computes recursively a winning strategy
$\sigma^*$ 
on the graph without~$e$. If taking~$e$ is not an improvement, $\sigma^*$ is 
optimal, otherwise~$\sigma^*$ switched to~$e$ is the new initial strategy. The 
\emph{least-entered} rule switches at a vertex at that the least 
number of switches has been performed so far. Cunningham's
\emph{least-recently-considered} or \emph{round-robin} rule 
fixes an initial ordering on all Player~0 vertices first, and then selects the
next vertex to switch at in a round-robin manner. Fearnley introduced
in~\cite{Fearnley10} \emph{snare memorisation}. It can be seen as an extension
of a basic improvement rule by a snare rule that memorises certain structures of
a game to avoid reoccurring patterns.

All local improvement rules discussed here can be computed in polynomial
time~\cite{JurdzinskiVoe00,Schewe08}. Hence the 
running time of the algorithm on a game depends primarily on the number of
improvement steps. 
In a series of papers and in his dissertation, Friedmann has constructed, 
for each of the above mentioned improvement rules, a class of parity games 
on which  the strategy improvement algorithm requires 
super-{}polynomially many iteration steps, with respect  to the size of the 
game~\cite{FriedmannPhD,Friedmann12a,Friedmann12b}.
We shall analyse these games in terms of certain complexity measures for
directed
graphs which we describe in the next section.

\section{Complexity measures for directed graphs}

For most of the complexity measures we shall work with a 
characterisation in terms of so-called \emph{graph searching games}, that allow
us a more 
intuitive point of view on
the measures and give us an easier way to analyse the graphs in question. A
graph searching game is played on a graph by a team of cops and a robber. In any
position, the robber is on a vertex of the graph and each cop either also
occupies a vertex or is outside of the graph. The robber can move between
vertices along cop-free paths in the graph, \ie  paths whose vertices are not
occupied by cops. The moves of the cops have typically no restrictions. The aim
of
the cops is to capture the robber, \ie to force him in a position where he has
no legal moves. Precise rules of moves characterise a complexity measure of the
graph. The value of the measure is the minimal number of cops needed to capture
the robber (${}-1$ in some cases). Hence, on simple graphs,  few cops suffice
to win wheras complex graphs demand many cops to capture the robber.

In that way, \tw, \dagw, \kw, \dpw and \ent can be described. Another measure
that we shall consider is \cw, for which no characterisation by games is
known. Recall that common definitions of such measures are usually given
by means of appropriate decompositions of the graph into small parts that are
connected in a simple way: as
a directed path for \dpw, as a tree for \tw, as a DAG for \dagw and \kw, or as a
parse tree for \cw. The maximal
size of a part in a decomposition corresponds
to the minimal number of cops needed to capture the robber on the 
graph (except for \ent, for which no corresponding decomposition is known). Such
decompositions can be used to provide efficient algorithms for problems
that are difficult (\eg \np-complete) in general, on graph classes where the
values 
of the respective measure is bounded. 
In particular, this is the case for parity games. In a series of papers it has
been
shown that parity games can be solved in \ptime on graph classes with bounded
\tw, \dpw, \dagw, \kw, \ent or \cw. In the following, we define all
complexity measures discussed above by their characterisations in terms of
graph
searching games except \cw, for which we give an inductive definition. 

\paragraph{\Tw.}

\Tw is a classical measure of cyclicity on undirected
graphs. It measures how close a graph is to being a tree. The \tw
game $\twg_k(\calG)$ is played on an undirected graph $\calG=(V,E)$ by a team
of~$k$ cops and a robber, whereby~$k$ is a parameter of the game. Initially,
there are no cops on the graph and the robber chooses an arbitrary vertex and
occupies it in the first move. The players move alternating. A cop
position is a tuple~$(C,v)$ where $C\subseteq V$ with $|C|\le k$ is the set of
vertices occupied by cops (if $k > |C|$, the remaining cops are considered to
be outside of~$\calG$) and~$v\notin C$ is the vertex occupied by the robber. The
cops can move to a position $(C,C',v)$ with $|C'| \le k$. Intuitively, they
announce their next placement~$C'$ and take cops from $C'\setminus C$ away
from~$\calG$. The robber positions are of the form $(C,C',v)$. The robber can
run along paths on the graph whose vertices are not occupied by cops, \ie the
next (cop) position may be $(C',w)$ where $w\in\Reach_{\calG-(C\cap
C')}(v)\setminus C'$, \ie~$w$ is reachable from~$v$ in~$\calG-(C\cap
C')$. Thus the cops are placed on the vertices they announced in
their previous move; furthermore, only those cops prevent the robber to run who
are both in the previous and in the next placements. However, the robber is not
permitted to go to a vertex which will be occupied by a cop in the next
position.

The robber is captured in a position $(C,C',v)$ if he has no legal move: all
neighbours of~$v$ are in $C\cap C'$ and a cop is about to occupy his vertex, \ie
$v\in C'$. A play is monotone if the robber can never reach a vertex that has
already been unavailable for him. It suffices to demand that in any move, the
robber is not able to reach a vertex that has just been left by a cop. Formally,
a play is monotone if, for every cop move $(C,v)\to(C,C',v)$ in the play, we
have $\Reach_{\calG-(C\cap C')}(v) \cap (C\setminus C') = \0$.

The cops win a monotone play if it ends in a position in that the robber is
captured. Infinite or non-{}monotone plays are won by the robber. A (positional)
strategy for the cops is a function $\sigma\colon 2^V\times V \to 2^V$ which
prescribes, for every cop position $(C,v)$, the next placement $\sigma(C,v)$.
Similarly, a (positional) strategy for the robber is a function $\rho\colon
2^v\times 2^V \to V$ which maps a robber position $(C,C',v)$ to a cop
position $(C',w)$ with $w\in \Reach_{\calG - (C\cap C')}(v)$. A play $\pi$ is
consistent with a strategy $\sigma$ (or $\rho$) if every move in the play is
made according to $\sigma$ (or $\rho$). A strategy for a player is winning if
he wins every play consistent with that strategy. As the winning conditions for
both players are Boolean combinations of reachability and safety, it suffices to
consider only positional strategies.

The minimal number~$k$ such that the cops have a winning strategy in
$\twg_{k+1}(\calG)$ is the \emph{\tw} $\twm(\calG)$ of~$\calG$. If $\calG =
(V,E)$
is a directed graph then $\twm(\calG)$ is $\twm(\overline{\calG})$ where
$\overline{\calG} =
(V,\overline{E})$ and~$\overline{E}$ is the symmetric closure of~$E$.

\paragraph{\dagw.}

The \dagw game~$\dagwg_k(\calG)$~\cite{BerwangerDawHunKreObd12} is played on a
directed graph~$\calG$ in the
same way as the \tw game, but the edge relation of the graph is not
symmetrised. Note that the meaning of the
reachability relation $\Reach{}$ on directed graphs is, of course, different
from the reachability relation on undirected graphs. In a \dagw game, the robber
is
allowed to run only along \emph{directed paths}. The \dagw $\dagwm(\calG)$ of a
graph $\calG$ is the minimal number $k$ such that the cops have a winning
strategy in the game $\dagwg_k(\calG)$. Note the difference to \tw where the
parameter of the game is defined by $k+1$ in order to make forests have \tw~$1$.

\paragraph{\Kw.}

The \kw game~$\kwg_k(\calG)$ is played on a directed graph~$\calG$ in the same
way as the \dagw game, but the robber is, first, invisible for the cops and,
second, inert~\cite{HunterKre08}. Invisibility means that a winning strategy for
the cops must not depend on the robber vertex and the cops can make
assumptions about it only from their own moves. Inertness of the robber means
that the robber can change his vertex only if a cop has announced to occupy the
robber vertex in the next position. Formally, a cop position is a tuple
$(C,R)$ where $C$ is as before and $R\subseteq V$ is disjoint with~$C$. The
cops can move to a
robber position $(C,C',R)$. The moves of the robber are determined by the
current position, so, in
fact, we have a one{}-player game: the next position is $(C',R')$ where $R' =
(R\cup \Reach_{\calG- (C\cap C')}(R\cap C'))\setminus C'$. The term
$\Reach_{\dots}(R\cap C')$ describes the inertness of the robber and the term
$R\cup \dots$ means that the robber may still be on a previous vertex if no
cop is about to occupy it. \Kw is defined analogously to \dagw.

\paragraph{\Dpw.}

The \dpw game is played as the \kw game, but the robber is not inert. Formally,
the position
following a robber position $(C,C',R)$ is $(C',R')$ where $R' = \Reach_{\calG-
(C\cap C')}(R\cap C')\setminus C'$. Similar to \tw, \dpw $\dpwm(\calG)$
of~$\calG$ is the
minimal number~$k$ such that the cops have a winning strategy in
$\dpwm_{k+1}(\calG)$.

\paragraph{\Ent.}

The \ent game $\entg_k(\calG)$~\cite{BerwangerGra05} is slightly different from
the games defined 
above. First, the robber can move only along an edge rather than along a whole
path.
Second, he is \emph{obliged} to leave his vertex, no matter if a cop is about to
occupy it or not (thus no cops are needed on an acyclic graph). Third, the cops
are
restricted in their moves as well. In a cop position $(C,v)$, one cop can
go to the~$v$, other cops must remain on their vertices. Another possibility for
the cops is to stay idle.
More formally, cop
positions are of the form $(C,v)$ and the cops can move to some
position~$(C',v)$ where $C'=C$, or $C' = C\cup \{v\}$ (if a new cop comes in to
the
graph), or $C' = (C\cup \{v\}) \setminus \{w\}$ where $w\in C$ is distinct
from~$v$. From a position $(C',v)$, the robber can move to a position~$(C',v')$
where $(v,v')\in E$ and~$v'\notin C'$. Unlike all games above, in the \ent game,
the cops do not need to play monotonically, so they win all finite plays and the
robber wins all infinite plays. \Ent $\entm(\calG)$ of a graph $\calG$ is the
minimal number~$k$ such that the cops have a winning strategy
in~$\entg_k(\calG)$.

\paragraph{\Cw} was introduced in~\cite{CourcelleEngRoz93}. Let $C$ be a finite
set of labels. A $C$-labelled graph is a tuple~$\calG = (V,E,\gamma)$ where
$\gamma \colon V \to C$ is a map that labels the vertices of~$\calG$ with
colours from~$C$. An $a$-port is a vertex with colour~$a$. Let~$k$ be a positive
natural number and let~$|C|\le k$. The class~$\calC_k$ of graphs of \cw at
most~$k$ is defined
inductively by the following operations.

\begin{enumerate}[(1)]
\item For every $a \in C$, a single $a$-port without edges is in~$\calC_k$.

\item If $\calG_1 = (V_1,E_1,\gamma_1)$ and $\calG_2 = (V_2,E_2,\gamma_2)$ are
in $\calC_k$ then the disjoint union $\calG_1 \oplus \calG_2 = (V,E,\gamma)$ of
$\calG_1$ and $\calG_2$ is in~$\calC_k$ where $V = V_1 \dunion V_2$, $E=
E_1\dunion E_2$, and $\gamma(v) = \gamma_1(v)$ if $v\in V_1$ and $\gamma(v) =
\gamma_2(v)$ if $v\in V_2$.

\item If $\calG = (V,E,\gamma)$ is in $\calC_k$ then the  graph $\calG'$
obtained by recolouring every $a$-port to a $b$-port is in $\calC_k$, \ie
$\calG' =
(V,E,\gamma')$ where $\gamma'(v) = \gamma'(v)$ if $\gamma(v)\neq a$ and
$\gamma'(v) = b$ otherwise.

\item If $\calG = (V,E,\gamma)$ is in~$\calC_k$ then the graph $\calG'$ obtained
by connecting all $a$-ports to all $b$-ports is in~$\calC_k$, \ie $\calG' =
(V,E',\gamma)$ where $E' = E \cup \{(v,w) \mid \gamma(v) = a \text{ and }
\gamma(w) = b\}$.
\end{enumerate}
The \cw $\cwm(\calG)$ of a graph $\calG = (V,E,\gamma)$ is the least~$k$ such
that the graph $(V,E)$ is in $\calC_k$.

The following theorem is a combination of results proved in
\cite{BerwangerDawHunKreObd12,BerwangerGra05,BerwangerGraKaiRab12,HunterKre08,Obdrzalek03,Obdrzalek07}.

\begin{theorem}\label{thm_parity_fast}
Let $C$ be any class of finite graphs on which
at least one of the following measures is bounded: \tw, \dpw, \dagw, \cw, \kw,
\ent.
Then the winning regions of parity games on $C$ are computable in polynomial
time.
\end{theorem}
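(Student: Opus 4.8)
\emph{Proof strategy.} Since the hypothesis is a disjunction over six parameters, the plan is to prove the claim separately for each, and in each case to (i) compute in polynomial time a structure of bounded width witnessing the bound, and (ii) run a dynamic programme along that structure. Throughout, fix a game graph $\calG\in C$ with $n$ vertices and priority function of range $d\le n$; ``polynomial time'' will mean polynomial in $n$, with a degree that may depend on the fixed bound $k$, which is harmless.

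For \tw I would first compute a tree decomposition of $\overline{\calG}$ of width at most $k$ (in polynomial, indeed FPT, time) and then process its bags from the leaves to the root, maintaining at each bag a table of \emph{realisable interface types}. An interface type records, for the $\le k+1$ vertices currently in the bag, the data needed to glue partial games soundly: in the crude version, which boundary vertices Player~$0$ can already secure; in the precise version, the restriction to the boundary of a Jurdzi\'nski--V\"oge-style valuation. As the boundary has bounded size and the type space has size bounded in terms of $k$ and $d$, each introduce/forget/join step is polynomial and the root table yields the winning regions; this is the algorithm of \cite{Obdrzalek03}. For \dagw and \kw the same scheme runs over a DAG-decomposition (obtainable in time $n^{O(k)}$), and for \dpw one may simply note that bounded \dpw implies bounded \dagw and invoke the \dagw case (or run the programme over the path-decomposition directly, which is the easy special case). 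The genuinely new difficulty in the directed setting is that a bag need not be a clean separator --- the robber can thread directed paths through it --- so interface types must in addition encode directed reachability among boundary vertices and the guarantees a player can force across the bag; making this bookkeeping sound is the technical core of \cite{BerwangerDawHunKreObd12,HunterKre08} and is the step I expect to be the main obstacle.

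For \ent no width decomposition is known, so here I would argue directly from the cop-and-robber characterisation: a winning strategy for $k$ cops in $\entg_k(\calG)$ confines the ``active'' part of every play to the $\le k$ cop vertices together with the robber vertex, which bounds the number of configurations the solver must examine and gives a recursion with only polynomially many subproblems --- this is the argument of \cite{BerwangerGra05,BerwangerGraKaiRab12}. For \cw I would use a clique-width expression of width bounded in terms of $k$ (computable in polynomial time by the approximation of Oum and Seymour whenever $\cwm(\calG)\le k$) and run the dynamic programme of \cite{Obdrzalek07} along its parse tree, keeping for each of the boundedly many colour classes a summary of the subgame built so far whose size and update cost are polynomial in $n$ and $d$; when $d$ is bounded one may alternatively invoke the MSO-definability of winning regions together with polynomial-time MSO model checking on classes of bounded \cw. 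In every case the total cost is polynomial in $n$ for the fixed class $C$, which is the assertion of the theorem.
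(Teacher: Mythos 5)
Your proposal is correct and takes essentially the same route as the paper, which offers no proof of its own but presents the theorem as a combination of the six cited results (\cite{Obdrzalek03} for \tw, \cite{BerwangerDawHunKreObd12} for \dagw, \cite{HunterKre08} for \kw, \cite{BerwangerGra05,BerwangerGraKaiRab12} for \ent, \cite{Obdrzalek07} for \cw, and the reduction of \dpw to \dagw recorded as Theorem~\ref{thm_relate_dagw_kw_pw}). Your per-measure sketch of the underlying dynamic-programming and game-theoretic arguments matches exactly the sources the paper appeals to.
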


It follows directly from the definitions that \dagw and \kw are bounded in \dpw.

\begin{theorem}\label{thm_relate_dagw_kw_pw}
For a graph $\calG$, we have $\dagwm(\calG) \le \dpwm(\calG)+1$ and 
$\kwm(\calG) \le \dpwm(\calG)+1$.
\end{theorem}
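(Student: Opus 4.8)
The plan is to exhibit, for each of the two inequalities, a winning strategy for $k+1$ cops in the relevant graph searching game, given a winning strategy for $k+1$ cops in the $\dpw$ game $\dpwg_{k+1}(\calG)$ (recall that $\dpwm(\calG) = k$ means the cops win with $k+1$ cops in the directed-pathwidth game, by the ``$k+1$'' convention noted in the excerpt). The key observation is that the $\dpw$ game, the $\kw$ game, and the $\dagw$ game differ only in how much the robber is constrained: in the $\dpw$ game the robber is invisible and \emph{not} inert; in the $\kw$ game he is invisible and \emph{inert}; in the $\dagw$ game he is visible (and moves along directed paths). So the informal idea is that a cop strategy that defeats the ``hardest'' robber — the invisible, non-inert one — also defeats the more constrained robbers, once we account for the book-keeping differences between the games.

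First I would handle $\kwm(\calG) \le \dpwm(\calG)+1$. Fix a winning cop strategy $\sigma$ in $\dpwg_{k+1}(\calG)$. I want to use it verbatim in $\kwg_{k+1}(\calG)$. Both games are one-player (the robber's moves are determined by the cop moves), and the cop positions have the same shape $(C,R)$ with $R$ the current robber-possibility set. The only difference is the update rule: after a cop move $(C,C',R)$ the next set is $R' = \Reach_{\calG-(C\cap C')}(R\cap C')\setminus C'$ in the $\dpw$ game, versus $R' = (R\cup\Reach_{\calG-(C\cap C')}(R\cap C'))\setminus C'$ in the $\kw$ game — the $\kw$ robber-set additionally retains $R\setminus C'$. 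I would argue by induction on the length of the play that, when $\sigma$ is played against the $\kw$ robber, the current $\kw$ robber-set is always contained in the corresponding $\dpw$ robber-set obtained by playing $\sigma$ against the $\dpw$ robber from the same history; hence the $\kw$ robber is captured (reaches $R'=\emptyset$) no later than the $\dpw$ robber. The subtlety is that the two plays branch as soon as the sets differ, so the induction hypothesis must be phrased as ``the $\kw$-set is a subset of the $\dpw$-set after the same sequence of cop placements'', and one checks that the update operator is monotone in $R$ and that the extra term $R\setminus C'$ in the $\kw$ rule is already subsumed — more precisely, one shows directly that the $\kw$ update of a set $R$ equals the $\dpw$ update of the set $R$ \emph{together with all of $R$ carried forward}, which is exactly what a non-inert robber could also do by ``not moving''; since the $\dpw$ robber is free to stay put wherever he is not about to be occupied, the inert $\kw$ robber has strictly fewer options, giving the containment and hence capture with the same number of cops.

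For $\dagwm(\calG) \le \dpwm(\calG)+1$ the step is the classical fact that a visible robber is no harder to catch than an invisible one: a cop strategy that wins against \emph{all} possible robber positions simultaneously (which is what a winning strategy in the invisible game does — it drives the possibility set to $\emptyset$) in particular wins against the single actual position of a visible robber. Concretely, given the winning $\sigma$ in $\dpwg_{k+1}(\calG)$, in $\dagwg_{k+1}(\calG)$ the cops track the set $R$ of all vertices the robber could be at consistently with the cop moves so far (ignoring the actual robber), play $\sigma$ on that set, and since $\sigma$ forces $R$ to become empty while the real robber's vertex always lies in $R$, the real robber is caught. One must check the monotonicity bookkeeping so that the resulting play in the $\dagw$ game is monotone whenever the $\dpw$ play is — this follows because the $\dagw$ and $\dpw$ reachability notions coincide (both use directed paths) and the monotonicity condition $\Reach_{\calG-(C\cap C')}(v)\cap(C\setminus C')=\emptyset$ for the realised robber position $v\in R$ is implied by the corresponding condition for the whole set $R$. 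The main obstacle in writing this cleanly is precisely this simulation bookkeeping: matching up cop positions of the two games move-for-move and verifying that monotonicity is preserved; the underlying combinatorial content (monotonicity of the reachability-update operator and ``subset-of-possibilities implies earlier capture'') is routine once the correspondence is set up.
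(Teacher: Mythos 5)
Your argument is correct and is exactly the content behind the paper's remark that these bounds ``follow directly from the definitions'': the paper gives no explicit proof, and your move-for-move simulation (inductive containment of the inert invisible robber's possibility set in the non-inert one's, plus the standard invariant that a visible robber's vertex always lies in the invisible game's possibility set, with the monotonicity check inherited via $\Reach_{\calG-(C\cap C')}(v)\subseteq\Reach_{\calG-(C\cap C')}(R)$) is the intended justification, with the parameter shift $k\mapsto k+1$ accounted for correctly. One minor remark: the containment of the \kw robber set in the \dpw robber set requires reading the \dpw update as $\Reach_{\calG-(C\cap C')}(R)\setminus C'$ --- the non-inert robber may always move or stay put, as you in effect assume --- rather than the literal formula $\Reach_{\calG-(C\cap C')}(R\cap C')\setminus C'$ printed in the paper, which is evidently a typo (it would let the cops win by never threatening the robber).
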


\section{Friedmann's counterexamples}

We now describe and analyze the graphs that underlie Friedmann's counterexample
games.
For most of the rules, these graphs have a rather similar structure, and for
reasons of
space we give a detailed pesentation just for the examples for the switch-all
rule
and Zadeh's Pivoting rule. Our analysis of the examples for the other rules
will be summarized in a table, proofs will be given in the full version of this
paper.

\subsection{The Switch-All Rule}
 
For $n \in \N \setminus \{0\}$, the graph $\calG_n=(V_n,E_{n})$  underlying
Friedmann's games against the switch-all rule can defined as follows.
The set of vertices is 
\[V_n\coloneqq \{x,s,c,r\} \cup \{t_{i},a_{i}: 1\leq i \leq 2n\} \cup
\{d_{i},e_{i},g_{i},k_{i},f_{i},h_{i}:1\leq i\leq n\}.\] 
The set of edges and the graph $\calG_3$ are given in
Figure~\ref{fig_switch_all}. The graph~$\calG_n$
consists of cycle gadgets induced by $\{d_i,e_i\}$ each encoding a bit which is
considered to be set if the current strategy of Player~0 is to move from~$d_i$
to~$e_i$ and unset otherwise.  Intuitively, the strategy improvement
algorithm with the switch-all rule starts from the state where all bits are
unset and increases the bit
counter by one in each round. The subgraph induced by all~$h_j$, $k_j$, $g_j$,
and~$f_j$, for~$j\le n$ guarantees the algorithm to swap the least significant
bit and the
subgraph induced by~$a_j$ and $t_j$, for~$j\le 2n$ ensures that the other bits
to change are swapped as well, see~\cite{FriedmannPhD} for details. 

\begin{figure}[h]
\begin{minipage}{0.7\linewidth}

\begin{tikzpicture}[scale=0.7]

% tricking to move the picture to the right: an invisible bounding box
% which is smaller than the actual picture
\draw[use as bounding box,opacity=0] (0,-2) rectangle (13,16.5);

% BEGIN t and c

  \path (0,1) node[vertex] (t6) {$t_{6}$};

  \path (2.2,1) node[vertex] (t5) {$t_{5}$};

  \path (4.4,1) node[vertex] (t4) {$t_{4}$};

  \path (6.6,1) node[vertex] (t3){$t_{3}$};

  \path (8.8,1) node[vertex] (t2){$t_{2}$};

  \path (11,1) node[vertex] (t1) {$t_{1}$};

  \path (13,1) node[vertex] (c) {$c$};
  \node[dashedvertex] (r) at (13,2.5){r};
  \node[dashedvertex] (s) at (13,-0.5){s};
  \draw[-slim] (c)to (r);
  \draw[-slim] (c)to (s);
% END t and c
  
% BEGIN lower r and s

  \node[dashedvertex] (r6) at (-0.48,-0.5) {$r$}; \draw [-slim](t6) to (r6);

  \node[dashedvertex] (s6) at (0.48,-0.5) {$s$};  \draw [-slim](t6) to (s6);

  \node[dashedvertex] (r5) at (1.72,-0.5) {$r$}; \draw [-slim](t5) to (r5);

  \node[dashedvertex] (s5) at (2.68,-0.5) {$s$};  \draw [-slim](t5) to (s5);

  \node[dashedvertex] (r4) at (3.92,-0.5) {$r$}; \draw [-slim](t4) to (r4);

  \node[dashedvertex] (s4) at (4.88,-0.5) {$s$};  \draw [-slim](t4) to (s4);

  \node[dashedvertex] (r3) at (6.12,-0.5) {$r$}; \draw [-slim](t3) to (r3);

  \node[dashedvertex] (s3) at (7.08,-0.5) {$s$};  \draw [-slim](t3) to (s3);

  \node[dashedvertex] (r2) at (8.32,-0.5) {$r$}; \draw [-slim](t2) to (r2);

  \node[dashedvertex] (s2) at (9.28,-0.5) {$s$};  \draw [-slim](t2) to (s2);

  \node[dashedvertex] (r1) at (10.52,-0.5) {$r$}; \draw [-slim](t1) to (r1);

  \node[dashedvertex] (s1) at (11.48,-0.5) {$s$};  \draw [-slim](t1) to (s1);
% END lower r and s

% BEGIN a

  \path (0,2.5) node[vertex] (a6) {$a_{6}$};

  \path (2.2,2.5) node[vertex] (a5) {$a_{5}$};

  \path (4.4,2.5) node[vertex] (a4) {$a_{4}$};

  \path (6.6,2.5) node[vertex] (a3) {$a_{3}$};

  \path (8.8,2.5) node[vertex] (a2) {$a_{2}$};

  \path (11,2.5) node[vertex] (a1) {$a_{1}$};

  %\draw [-slim](v6)--(v1);
  \draw [-slim](t6) to (t5);\draw [-slim](t5) to (t4);\draw [-slim](t4) to (t3);
  \draw [-slim](t3) to (t2);\draw [-slim](t2) to (t1);\draw [-slim](t1) to (c);
  \draw [-slim](a6) to (t6);\draw [-slim](a5) to (t5);\draw [-slim](a4) to (t4);
  \draw [-slim](a3) to (t3);\draw [-slim](a2) to (t2);\draw [-slim](a1) to (t1);

% END a  

% BEGIN d

  \path (1.1,6) node[vertex] (d3) {$d_{3}$};

  \path (5.5,6) node[vertex] (d2) {$d_{2}$};

  \path (10,6) node[vertex] (d1) {$d_{1}$};

  \draw [-slim](d3) to (a6);\draw [-slim](d3) to (a5);\draw [-slim](d3) to
(a4);\draw
       [-slim](d3) to (a3);\draw [-slim](d3) to (a2);\draw [-slim](d3) to (a1);

  \draw [-slim](d2) to (a4);\draw [-slim](d2) to (a3);\draw [-slim](d2) to
(a2);\draw
       [-slim](d2) to (a1);

  \draw [-slim](d1) to (a2);\draw [-slim](d1) to (a1);
% END d  

% BEGIN upper r and s

  \node[dashedvertex] (r7) at (-0.1,6.60) {$r$}; \draw [-slim](d3) to (r7);
 
  \node[dashedvertex] (s7) at (-0.1,5.6) {$s$};  \draw [-slim](d3) to (s7);

  \node[dashedvertex] (r8) at (4.3,6.60) {$r$}; \draw [-slim](d2) to (r8);

  \node[dashedvertex] (s8) at (4.3,5.6) {$s$};  \draw [-slim](d2) to (s8);

  \node[dashedvertex] (r9) at (8.8,6.60) {$r$}; \draw [-slim](d1) to (r9);

  \node[dashedvertex] (s9) at (8.8,5.6) {$s$};  \draw [-slim](d1) to (s9);

% END upper r and s

% BEGIN e

  \path (1.1,7.5) node[vertex] (e3) {$e_{3}$};

  \path (5.5,7.5) node[vertex] (e2)  {$e_{2}$};

  \path (10,7.5) node[vertex] (e1) {$e_{1}$};

  \draw [-slim](d3)to[bend right=30](e3);\draw [-slim](e3)to[bend right=30](d3);
  \draw [-slim](d2)to[bend right=30](e2);\draw [-slim](e2)to[bend right=30](d2);
  \draw [-slim](d1)to[bend right=30](e1);\draw [-slim](e1)to[bend right=30](d1);

% END e
  
% BEGIN f

  \path (3.1,8) node[vertex] (f3) {$f_{3}$};
  \path (7.5,8) node[vertex] (f2)  {$f_{2}$};
  \path (12,8) node[vertex] (f1) {$f_{1}$};

  \draw [-slim](f3) to (e3);\draw [-slim](f2) to (e2);\draw [-slim](f1) to (e1);

% END f

% BEGIN g

  \path (3.1,11) node[vertex] (g3) {$g_{3}$};
  \path (7.5,11) node[vertex] (g2) {$g_{2}$};
  \path (12,11) node[vertex] (g1) {$g_{1}$};
  
  \draw [-slim](g3) to (f3);\draw [-slim](g2) to (f2);\draw [-slim](g1) to (f1);

% END g

% BEGIN h

  \path (1.1,11) node[vertex] (h3) {$h_{3}$};
  \path (5.5,11) node[vertex] (h2) {$h_{2}$};
  \path (10,11) node[vertex] (h1) {$h_{1}$};

  \draw [-slim](e3) to (h3);\draw [-slim](e2) to (h2);\draw [-slim](e1) to (h1);

% END h

% BEGIN k
  
  \path (1.1,13) node[vertex] (k3) {$k_{3}$};
  \path (5.5,13) node[vertex] (k2) {$k_{2}$};
  \path (10,13) node[vertex] (k1) {$k_{1}$};

  \draw [-slim](h3) to (k3);\draw [-slim](h2) to (k2);\draw [-slim](h1) to (k1);
  \draw [-slim](g3) to (k3);\draw [-slim](g2) to (k2);\draw [-slim](g1) to (k1);
  \draw [-slim](k1) to (g3);\draw [-slim](k1) to (g2);
  \draw [-slim](k2) to (g3);

% END k

% BEGIN x + real r and s

  \node[vertex] at (5.5,15) (x) {$x$};  
  \draw (x) to [slim loop] (x);

  \draw [-slim](k3) to (x);
  \draw [-slim](k2) to (x);
  \draw [-slim](k1) to (x);

  \path (13,9) node[vertex] (r) {$r$};
  \path (0,10) node[vertex] (s) {$s$};

  \draw [-slim,bend left=10](r) to (g3);
  \draw [-slim](r) to (g2);
  \draw [-slim](r) to (g1);
  \draw [-slim](s) to (f3);
  \draw [-slim](s) to (f2);
  \draw [-slim](s) to (f1);
  
  \draw [-slim](s)to[bend left=70](x);
  \draw [-slim](r)to[bend right=60](x);

% END x + real r and s

\end{tikzpicture}

\end{minipage}
\begin{minipage}{0.2\linewidth}

\begin{tabular}{|c|c|}
\hline
Node & Successors \\
\noalign{\hrule height 0.45mm}
$t_{1}$		& $\{ s,r,c \}$ \\
\hline
$t_{i>1}$ 	& $\{ s,r,t_{i-1} \}$ \\
\hline
$a_{i}$		& $\{ t_{i} \}$ \\
\hline 
$c$ 		& $\{ s,r \}$ \\
\hline \hline
\multirow{2}{*}{$d_{i}$} & $\{ s,r,e_{i} \}$\\
		& ${} \cup \{a_{j}|j \le 2i\}$ \\
\hline
$e_{i}$ 	& $\{ d_{i}, h_{i} \}$ \\
\hline
$g_{i}$ 	& $\{ f_{i}, k_{i} \}$ \\
\hline
\multirow{2}{*}{$k_{i}$} & $\{x\}$\\
		& ${}\cup \{g_{j}|i<j\leq n \}$ \\
\hline
$f_{i}$ 	& $\{ e_{i} \}$ \\
\hline
$h_{i}$ 	& $\{ k_{i} \}$ \\
\hline \hline
\multirow{2}{*}{$s$} & $\{x\}$\\
		& ${}\cup \{f_{j}|j\leq n \}$ \\
\hline
\multirow{2}{*}{$r$} & $\{x\}$\\
		& ${}\cup \{g_{j}|j\leq n \}$ \\
\hline
$x$		& $\{x\}$ \\
\hline
\end{tabular}

\end{minipage}

\caption{The graph~$\calG_3$ and the edge relation of~$\calG_n$ for
the counterexample to the switch-all rule.}
\label{fig_switch_all}

\end{figure}
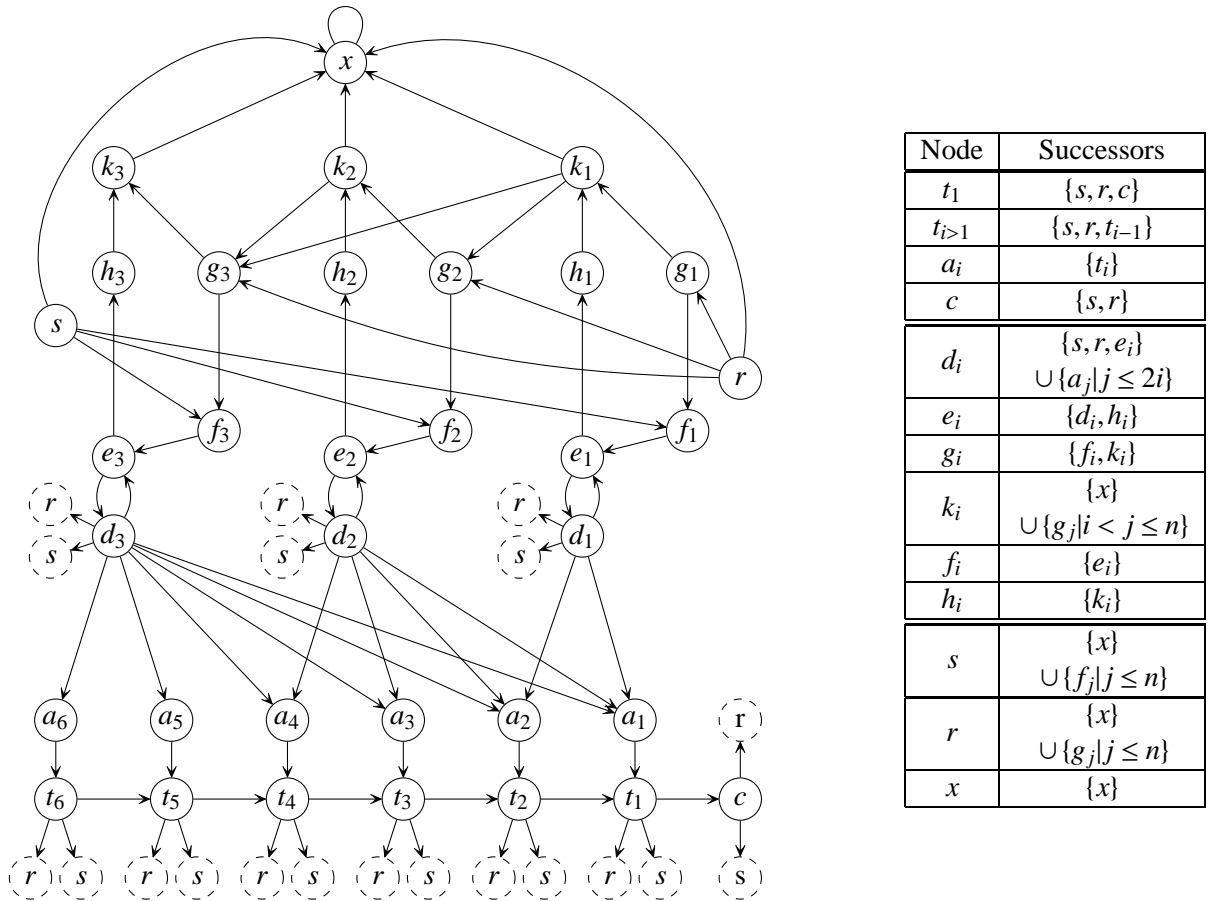

Friedmann showed in~\cite{FriedmannPhD}
that, for every $n>0$, there is a parity game of size $O(n^2)$ with underlying
graph~$\calG_n$ such that the strategy improvement algorithm with 
the switch-all rule requires at least~$2^n$ improvement steps on that game.

We shall now establish upper bounds for \dagw, \kw, \dpw, \ent, and \cw of 
the graphs~$\calG_n$, which imply by Theorem~\ref{thm_parity_fast}, 
that Friedmann's games belong to natural classes of parity games 
that can be solved efficiently by other approaches than the strategy improvement
algorithm.
We start with an analysis of \tw of~$\calG_n$ and show that it is unbounded 
on the class of graphs~$\calG_n$. Recall that \tw of a directed 
graph~$\calG = (V,E)$ is defined by the \tw of~$\olcalG=(V,\olE)$ where $\olE$
is 
the symmetrical closure of~$E$.
The reason for \tw to be unbounded is that it contains arbitrarily large
complete bipartite graphs $\calK_{n,n}$ as
subgraphs, whereby~$\twm(\calK_{n,n}) = n$. Indeed, every vertex has $n$ direct 
successors, so if the robber is caught staying on a vertex~$v$, all successors
of~$v$
and~$v$ itself must be occupied by cops.
The following lemma shows that we can find an arbitrary complete bipartite graph
as a subgraph
of a graph of the family $\olcalG_{n}$.

\begin{lemma}\label{lemma_bipartite_in_G_n}
For every $k>0$, there is some~$n>0$ such that~$\olcalG_n$ has~$\calK_{k,k}$
as a subgraph.
\end{lemma}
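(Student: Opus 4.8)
The plan is to exhibit, for a suitable $n$, an explicit pair of disjoint vertex sets $A,B\subseteq V_n$ with $|A|=|B|=k$ such that every vertex of $A$ is adjacent in $\olcalG_n$ to every vertex of $B$. The natural candidates come from the $d_i$–$a_j$ edges: by the edge relation in Figure~\ref{fig_switch_all}, the successor set of $d_i$ contains $\{a_j : j\le 2i\}$, so in the symmetric closure $\olcalG_n$ the vertex $d_i$ is adjacent to all of $a_1,\dots,a_{2i}$. Hence for the vertex $d_n$ alone we already have $n$ vertices $a_1,\dots,a_n$ all adjacent to it, and more usefully, the set $B=\{d_{\lceil k/2\rceil},d_{\lceil k/2\rceil+1},\dots\}$ paired with $A=\{a_1,\dots,a_k\}$ works once $n$ is large enough: if $j\le k$ and $i\ge \lceil k/2\rceil$ then $j\le k\le 2i$, so $(d_i,a_j)\in E_n$ and thus $\{a_j,d_i\}\in\olE_n$.

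First I would fix $k>0$ and set $n\coloneqq k$ (any $n\ge k$ suffices). Then I would take
\[
A \coloneqq \{a_1,a_2,\dots,a_k\}, \qquad B \coloneqq \{d_{\lceil k/2\rceil}, d_{\lceil k/2\rceil+1},\dots, d_{\lceil k/2\rceil + k - 1}\},
\]
choosing $n$ a bit larger if necessary so that all indices of the $d_i$'s in $B$ lie in $\{1,\dots,n\}$ and all indices of the $a_j$'s lie in $\{1,\dots,2n\}$ — concretely $n = \lceil 3k/2\rceil$ is safely enough. These two sets are disjoint since $A$ consists only of $a$-vertices and $B$ only of $d$-vertices, and each has exactly $k$ elements. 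For every $a_j\in A$ and every $d_i\in B$ we have $j\le k$ and $i\ge \lceil k/2\rceil$, hence $j\le 2\lceil k/2\rceil\le 2i$, so by the definition of the edge relation $d_i$ has $a_j$ as a successor, i.e.\ $(d_i,a_j)\in E_n$, and therefore $\{a_j,d_i\}\in\olE_n$. Thus the bipartite graph on $(A,B)$ induced by $\olcalG_n$ contains all $k^2$ edges, i.e.\ $\olcalG_n$ has $\calK_{k,k}$ as a subgraph.

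This argument is entirely routine; there is essentially no obstacle, only bookkeeping. The one point to be a little careful about is the index ranges: the $d_i$ are only defined for $1\le i\le n$ and the $a_j$ only for $1\le j\le 2n$, so one must pick $n$ large enough that the chosen window of $d$-indices fits inside $[1,n]$ — this is why I take $n=\lceil 3k/2\rceil$ rather than $n=k$. (Alternatively one could avoid any shifting: using $B=\{d_1,\dots,d_k\}$ forces $A=\{a_1,a_2\}$ only for $d_1$, so the shift, or restricting to large indices, is genuinely needed to get a balanced $\calK_{k,k}$.) Once the index arithmetic is set up correctly, the adjacency claim $j\le 2i$ is immediate, and the lemma follows.
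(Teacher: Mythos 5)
Your proof is correct and follows essentially the same route as the paper's: both take $A=\{a_1,\dots,a_k\}$ and $B=\{d_{\lceil k/2\rceil},\dots,d_{\lceil k/2\rceil+k-1}\}$ and verify $j\le 2i$ from the edge relation, differing only in the (equally valid, slightly more generous) choice of $n$. The paper additionally notes that $A\cup B$ induces no further edges, so it actually gets $\calK_{k,k}$ as an induced subgraph, but that extra observation is not needed for the statement as given.
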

\begin{proof}
Choose $n\coloneqq \lceil \frac{k}{2} \rceil + k - 1$. The vertex $d_{\lceil
\frac{k}{2}\rceil}$ is the first of the vertices $d_{1},\dots,d_{n}$ to be
connected to the vertices $A\coloneqq \{a_{j}:j\leq k\}$.
The~$k-1$ vertices $d_{i}$, $i=\lceil \frac{k}{2} \rceil + 1,\dots, \lceil
\frac{k}{2} \rceil + k - 1$ are connected to each vertex of~$A$ as well. Neither
the vertices of~$A$ are directly connected to one another, nor are the vertices
of $B\coloneqq \{d_{i} \mid i=\lceil \frac{k}{2} \rceil,\dots, \lceil
\frac{k}{2}
\rceil + k - 1\}$. It follows that $\olcalG[A\cup B]$ is isomorphic
to~$\calK_{k,k}$.
\end{proof}

\begin{corollary}\label{cor_tw_unbounded_on_switch_all}
For every $k>0$, there is some $n>0$ such that $\twm(\calG_n) > k$.
\end{corollary}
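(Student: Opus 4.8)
The plan is to obtain the corollary as an immediate consequence of Lemma~\ref{lemma_bipartite_in_G_n}, using two standard facts about treewidth: that it is monotone under taking subgraphs, and that the complete bipartite graph $\calK_{m,m}$ satisfies $\twm(\calK_{m,m}) = m$. Recall that for a directed graph the treewidth is defined via its symmetric closure, so $\twm(\calG_n) = \twm(\olcalG_n)$, and it therefore suffices to bound the treewidth of the undirected graph $\olcalG_n$ from below.

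Concretely, given $k>0$ I would invoke Lemma~\ref{lemma_bipartite_in_G_n} with the parameter $k+1$ in place of $k$. This produces some $n>0$ for which $\olcalG_n$ contains $\calK_{k+1,k+1}$ as a subgraph (with the explicit choice $n = \lceil (k+1)/2\rceil + k$ coming out of the proof of the lemma). By subgraph-monotonicity of treewidth we then have $\twm(\olcalG_n) \ge \twm(\calK_{k+1,k+1}) = k+1 > k$, and hence $\twm(\calG_n) > k$, which is exactly the claim.

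There is essentially no obstacle left, since the combinatorial work has already been carried out in Lemma~\ref{lemma_bipartite_in_G_n}. The only point deserving a word of justification is the identity $\twm(\calK_{m,m}) = m$: the upper bound $\twm(\calK_{m,m}) \le m$ is witnessed by a path decomposition that keeps one side of the bipartition in every bag and slides the other side vertex by vertex, while the lower bound $\twm(\calK_{m,m}) \ge m$ is precisely the cop-and-robber argument sketched in the paragraph preceding the lemma (to capture a robber sitting on a vertex $v$ of $\calK_{m,m}$ one must occupy all of $v$'s $m$ neighbours together with $v$ itself, so $m+1 = \twm(\calK_{m,m})+1$ cops are needed). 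Both facts are classical, so I would simply cite or state them and conclude.
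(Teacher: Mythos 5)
Your proof is correct and follows essentially the same route the paper intends: the corollary is presented as an immediate consequence of Lemma~\ref{lemma_bipartite_in_G_n} together with subgraph-monotonicity of treewidth and the fact that $\twm(\calK_{m,m})=m$, which is exactly the argument sketched in the paragraph preceding the lemma. Your one refinement --- invoking the lemma with parameter $k+1$ so that $\twm(\olcalG_n)\ge k+1 > k$ rather than merely $\ge k$ --- is a sensible fix of the off-by-one needed to obtain the strict inequality stated in the corollary.
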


\begin{remark}
Although the treewidth of~$\calG_n$ is unbounded, there is another
class of graphs with bounded treewidth, such that the strategy improvement
algorithm with switch-all rule requires super-polynomial time. 
We shall see in Section~\ref{subsec_other_rules} that for the random-edge
rule, Friedmann's counterexample class has bounded \tw. In fact, that class
requires super-polynomial time also for the switch-all
rule, see~\cite{FriedmannPhD} for details.
\end{remark}

Now we prove that \dpw of graphs~$\calG_n$ is bounded, which leads to
boundedness of \dagw and \kw.

\begin{theorem}\label{thm_dpw_bounded}
For all~$n>0$, we have $\dpwm(\calG_n) \le 3$. 
\end{theorem}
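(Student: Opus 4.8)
The plan is to exhibit a directed path-decomposition of $\calG_n$ of width $3$: a finite sequence $(C_1,\dots,C_p)$ of subsets of $V_n$, each of size at most $4$, covering $V_n$, with every vertex occupying a contiguous block of bags, and such that for every edge $(u,v)$ there are indices $l\le l'$ with $u\in C_l$ and $v\in C_{l'}$. Such a decomposition yields a monotone winning strategy for four cops in the \dpw game (two cops sit forever on $s$ and $r$, the remaining two sweep through $V_n\setminus\{s,r\}$ along the order built below), hence $\dpwm(\calG_n)\le 3$. The guiding observation is that $\{s,r\}$ can serve as a fixed two-vertex interface kept in every bag: once $s$ and $r$ are deleted, $\calG_n$ becomes acyclic apart from the $n$ length-two cycles $d_i\leftrightarrow e_i$ and the self-loop at $x$.

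To make this precise I would first produce a potential $\phi\colon V_n\setminus\{s,r\}\to\N$ that is non-increasing along every edge of $\calG_n$ with both endpoints outside $\{s,r\}$, with equality only on the edges $d_i\to e_i$, $e_i\to d_i$ and $x\to x$. A concrete choice that works: $\phi(x)=\phi(c)=0$, $\phi(t_i)=i$ and $\phi(a_i)=i+1$ for $i\le 2n$, and, writing $M\coloneqq 2n+10$, for $i\le n$ set $\phi(k_i)=M+10(n-i)$, $\phi(h_i)=\phi(k_i)+1$, $\phi(d_i)=\phi(e_i)=\phi(k_i)+2$, $\phi(f_i)=\phi(k_i)+3$, $\phi(g_i)=\phi(k_i)+4$. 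Checking $\phi(u)\ge\phi(v)$ for each of the few edge types is routine; the two features that force the shape of $\phi$ are the cross-index edges $k_i\to g_j$ for $j>i$, which make $\phi$ decrease by $10$ from one index block to the next, and the requirement that the whole block of vertices $d_i,e_i,f_i,g_i,h_i,k_i$ lie above the $a$- and $t$-vertices (hence the offset $M$, which dominates $\phi(a_j)\le 2n+1$). Since $d_i$ and $e_i$ are the only two vertices sharing a $\phi$-value, listing $V_n\setminus\{s,r\}$ by decreasing $\phi$ automatically puts $d_i$ immediately next to $e_i$.

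Fix such an order $v_1,\dots,v_m$ with $m=\lvert V_n\rvert-2$. I would take bags $B_j\coloneqq\{s,r,v_j\}$ and, for each $i$, merge the two consecutive bags carrying $d_i$ and $e_i$ into one bag $\{s,r,d_i,e_i\}$; the result is the desired sequence $C_1,\dots,C_p$. Coverage and the interval property are immediate, since $s$ and $r$ lie in every bag and every other vertex lies in exactly one bag; the largest bag is a merged one of size $4$, so the width is $3$. For the edge condition: any edge with an endpoint in $\{s,r\}$ is handled because $s$ and $r$ occur in both the first and the last bag; the edges $d_i\to e_i$, $e_i\to d_i$ and the loop $x\to x$ live inside a single bag; and every remaining edge $(u,v)$ has $\phi(u)>\phi(v)$, so $u$'s bag strictly precedes $v$'s bag.

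The real work, and the step I expect to be the main obstacle, is establishing the existence of such a $\phi$ — equivalently, that $\{s,r\}$ meets every directed cycle of $\calG_n$. This comes down to tracing the edge relation and checking that the only way to return to an already-visited vertex without passing through $s$ or $r$ is via one of the trivial cycles $d_i\leftrightarrow e_i$: in particular, that the descending chain $d_i\to a_j\to t_j\to\cdots\to t_1\to c$ dead-ends, its only continuations being the edges into $s$ and $r$, and that the ascending edges $k_i\to g_j$ (for $j>i$) can never be compensated to lower the index again. Once it is established that $\calG_n-\{s,r\}$ is acyclic except for the $n$ two-cycles $d_i\leftrightarrow e_i$ and the loop at $x$, the rest is bookkeeping.
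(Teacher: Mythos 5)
Your proof is correct, and it rests on the same structural insight as the paper's: $\{r,s\}$ is a two-element set whose removal leaves $\calG_n$ acyclic except for the two-cycles $d_i\leftrightarrow e_i$ and the loop at $x$, so two cops guard $r$ and $s$ while two more sweep the rest in topological order. But you package this differently. The paper directly describes a monotone four-cop strategy in the \dpw game (cops parked on $r$ and $s$, then a layer-by-layer sweep of $e_i,d_i,f_i,g_i,h_i,k_i$, followed by $x$, the $a_j,t_j$ chain and $c$), whereas you construct an explicit directed path decomposition of width $3$ certified by a potential $\phi$ that is non-increasing along every edge avoiding $\{r,s\}$. Your route costs more bookkeeping but buys verifiability: the claim ``$\phi(u)\ge\phi(v)$ for every edge'' can be checked edge-type by edge-type against the successor table, which is precisely the content the paper compresses into ``obviously the described strategy is monotone''. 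It also quietly repairs a detail: your within-layer order visits $g_i$ before $f_i$, as forced by the edge $g_i\to f_i$, whereas the paper's stated order ($d_i,f_i,g_i,h_i,k_i$) visits $f_i$ first, so a robber still sitting on the not-yet-cleared $g_i$ (whose only other in-neighbours are the guarded $r$ and the already-cleared $k_j$ with $j<i$) could recontaminate $f_i$ when the cop vacates it --- a harmless slip that a genuine topological order cannot make. The one step you leave implicit and should state explicitly is the translation from a width-$3$ directed path decomposition to a monotone winning strategy for four cops under the game-theoretic definition of $\dpwm$ used in this paper; it is the easy direction of the known equivalence, but since the paper defines \dpw only via the game, your argument needs that sentence.
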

\begin{proof}
We describe a monotone winning strategy for~$4$ cops in the \dpw game.
First,~$r$ and~$s$
are occupied by two cops who will stay there until the robber is caught. In the
next round,
the two other cops expel the robber from all vertices~$d_i$, $e_i$, $f_i$,
$g_i$, $h_i$, and $k_i$
(if he is there).
For $i = 1,\dots,n$, starting with $i=1$ the cops place a cop on~$e_i$ and then 
visit with the last remaining cop vertices $d_i$, $f_i$, $g_i$, $h_i$, and $k_i$
in that order.

The robber may be on vertex~$x$, or in the part of the graph induced by~$a_i$,
$t_i$ and~$c$, for~$i\in\{1,\dots,2n\}$. The cop from~$k_n$ (one of those not
on~$r$ or $s$) visits $x$ and then
$a_n$, $t_n$, $a_{n-1}$, $t_{n-1}$, \dots, $a_1$, $t_1$ in that order and
finally~$c$.
Obviously, the described strategy for~$4$ cops is monotone and guarantees that
the robber
is captured.
\end{proof}

By Theorem~\ref{thm_relate_dagw_kw_pw}, we get the following corollary.

\begin{corollary}\label{cor_dagw_and_kw_unbounded_on_switch_all}
For all $n>0$, we have $\dagwm(\calG_n) \le 4$ and $\kwm(\calG_n) \le 4$.
\end{corollary}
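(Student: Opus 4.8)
The plan is to obtain both bounds by a direct substitution into the relation between directed pathwidth and the other two measures, rather than re-analysing the cops-and-robber games from scratch. First I would invoke Theorem~\ref{thm_dpw_bounded}, which guarantees $\dpwm(\calG_n)\le 3$ for every $n>0$. Then I would apply Theorem~\ref{thm_relate_dagw_kw_pw}, which states that for every graph $\calG$ one has $\dagwm(\calG)\le\dpwm(\calG)+1$ and $\kwm(\calG)\le\dpwm(\calG)+1$. Instantiating $\calG=\calG_n$ and plugging in the bound $\dpwm(\calG_n)\le 3$ immediately yields $\dagwm(\calG_n)\le 4$ and $\kwm(\calG_n)\le 4$, which is exactly the claim.

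There is essentially no obstacle here: all the real work has already been carried out in the proof of Theorem~\ref{thm_dpw_bounded} (exhibiting an explicit monotone winning strategy for four cops in the directed pathwidth game), and Theorem~\ref{thm_relate_dagw_kw_pw} is quoted from the earlier part of the paper. The only thing to be careful about is bookkeeping of the "$+1$" conventions: in the directed pathwidth game the parameter is shifted ($\dpwm$ is defined via $\dpwm_{k+1}$), whereas in the DAG-width and Kelly-width games it is not, and Theorem~\ref{thm_relate_dagw_kw_pw} already absorbs this discrepancy, so no further adjustment is needed.

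If one preferred a self-contained argument, one could instead observe directly that the four-cop schedule from the proof of Theorem~\ref{thm_dpw_bounded} --- park two cops on $r$ and $s$, sweep the "$d_i,e_i,f_i,g_i,h_i,k_i$" layers bottom-up, then clean up $x$ and the "$a_i,t_i,c$" chain --- is monotone and captures the robber in the DAG-width game (visible robber, directed paths) and, since it does not rely on seeing the robber and never needs to revisit a freed vertex, also in the Kelly-width game (invisible, inert robber). But routing the estimate through Theorem~\ref{thm_relate_dagw_kw_pw} avoids re-verifying the capture and monotonicity conditions separately for each game, so that is the route I would take.

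\begin{proof}
Immediate from Theorem~\ref{thm_dpw_bounded} and Theorem~\ref{thm_relate_dagw_kw_pw}: since $\dpwm(\calG_n)\le 3$, we get $\dagwm(\calG_n)\le\dpwm(\calG_n)+1\le 4$ and $\kwm(\calG_n)\le\dpwm(\calG_n)+1\le 4$.
\end{proof}
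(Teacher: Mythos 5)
Your proposal is correct and matches the paper exactly: the corollary is stated there as an immediate consequence of Theorem~\ref{thm_relate_dagw_kw_pw} applied to the bound $\dpwm(\calG_n)\le 3$ from Theorem~\ref{thm_dpw_bounded}. No further comment is needed.
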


We modify the strategy from the proof of Theorem~\ref{thm_dpw_bounded} to obtain
a winning strategy for~$3$ cops in the \ent game. That is necessary, as in the
latter,
the cops are not permitted to be placed on a vertex which is not occupied by the
robber.
We first need a lemma from~\cite{BerwangerGra05}.

\begin{lemma}\label{lemma_ent_one}
The entanglement of a graph is one if, and only if, it is not acyclic and each of its
strongly connected components contains a vertex whose removal makes the component acyclic.
\end{lemma}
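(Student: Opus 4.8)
The plan is to prove both directions of the characterisation by exploiting the restricted moves available to the cops in the entanglement game: a single cop enters (or relocates) onto the robber's current vertex, and the robber is forced to leave along an edge. First I would dispose of the acyclicity clause: if $\calG$ is acyclic then no cops are needed at all (the robber is obliged to move and every play is finite, hence won by the cops), so $\entm(\calG)=0$, whereas if $\calG$ is not acyclic then $\entm(\calG)\geq 1$ since the robber can loop forever inside a cycle when no cop is present. Thus $\entm(\calG)=1$ forces $\calG$ to be non-acyclic, and it remains to characterise, among non-acyclic graphs, those with $\entm(\calG)\leq 1$.

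For the direction "the condition implies $\entm(\calG)\leq 1$", I would describe a winning strategy for one cop. Since entanglement is monotone under taking subgraphs and, more relevantly, the robber is confined to the strongly connected component he currently sits in (he can only leave an SCC, never return, and leaving an SCC eventually lands him in a terminal SCC or a sink), it suffices to win with one cop on each strongly connected component separately. So assume $\calG$ is strongly connected and contains a vertex $v$ whose removal makes $\calG$ acyclic. The single cop's strategy: wait idle until the robber's position coincides with $v$ — which must happen, because in $\calG-v$ the robber's forced moves trace a path in a DAG and hence cannot be continued forever, so to survive he must eventually pass through $v$ — and at that moment move the cop onto $v$. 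From then on the cop stays on $v$, the robber is confined to $\calG-v$, which is acyclic, so the robber runs out of moves and is captured. One technical point to check carefully is the timing: the cop should move onto $v$ precisely in the robber position where the robber occupies $v$ (the rules permit the single cop to go to the robber's vertex), and one must argue the robber cannot sidestep by never visiting $v$.

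For the converse, "$\entm(\calG)=1$ implies the condition", assume $\calG$ is non-acyclic (already established) and that one cop suffices, and suppose for contradiction that $\calG$ has a strongly connected component $S$ such that for every vertex $v\in S$ the graph $S-v$ still contains a cycle. I would then show the robber beats a single cop when restricted to $S$ (and hence on $\calG$, since he can just stay in $S$): whenever the lone cop is not on the robber's vertex, the robber is in an SCC, so he can move; and when the cop threatens — i.e. is about to occupy the robber's current vertex $v$ — the robber escapes into $S-v$, which by assumption still contains a cycle $C_v$, so the robber can reach $C_v$ and, as long as the cop does not land on $C_v$, circulate on it forever. The cop has only one token, so at each step he blocks at most one vertex; the key combinatorial claim to nail down is that the robber can always reach, within $S$ minus the single currently-blocked vertex, some cycle, and keep doing so, producing an infinite play. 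The main obstacle I expect is making this escape argument fully rigorous — in particular ruling out a clever cop strategy that herds the robber through a sequence of positions so that he is never able to complete a cycle — which is handled by observing that with only one cop the robber always has the whole of $S$ minus one vertex available, and strong connectedness plus "every one-vertex-deleted subgraph is still non-acyclic" guarantees a reachable cycle at every stage. Since this lemma is quoted from \cite{BerwangerGra05}, I would keep the write-up to this level of detail and refer there for the routine verification.
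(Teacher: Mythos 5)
The paper itself gives no proof of this lemma: it is imported verbatim from \cite{BerwangerGra05} (``We first need a lemma from\dots''), so there is no in-paper argument to compare against. Your plan is the standard game-theoretic proof and is sound in outline: the acyclicity dichotomy for $\entm(\calG)=0$ versus $\entm(\calG)\geq 1$ is correct, and the cop strategy ``wait on the current component's feedback vertex, then relocate when the robber drops to a lower component'' works because the entanglement rules let the single cop jump onto the robber's current vertex, and the condensation DAG is finite, so the retargeting happens only finitely often.

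The one place where you are leaning on an ``observation'' that actually needs an argument is the escape step in the converse direction. When the cop lands on the robber's vertex $r$ in the bad component $S$, it is not enough that $S-r$ contains a cycle: the robber must reach such a cycle from a successor of $r$ \emph{without passing through $r$}, and $S-r$ need not be strongly connected. (Concretely, $r$ may have a successor $a$ whose only outgoing edge leads back to $r$; choosing $a$ is fatal even though $S-r$ has a cycle elsewhere.) The claim you need is: if $S$ is strongly connected and $S-r$ contains a cycle, then some successor $u$ of $r$ can reach a cycle of $S-r$ inside $S-r$. This is true, but the proof is a short argument rather than an observation: every vertex of $S-r$ lies on a shortest path from $r$ in $S$, hence is reachable in $S-r$ from some successor of $r$; if no successor could reach a cycle, the union of these reachable sets --- which is all of $S-r$ --- would induce an acyclic graph (any cycle meeting a reachable set is wholly contained in it), contradicting the hypothesis. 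With that claim in hand, the robber's invariant ``after each cop move, only one vertex is blocked and a cycle of $S$ minus that vertex is reachable from my position avoiding it'' is maintained forever, and your proof goes through. Since the lemma is cited rather than proved in the paper, deferring the remaining routine details to \cite{BerwangerGra05} is appropriate, but this reachability claim is the non-routine part and should be stated explicitly.
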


\begin{theorem}\label{thm_ent_bounded}
For all $n > 0$, we have $\entm(\calG_n) \leq 3$.
\end{theorem}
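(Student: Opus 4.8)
The plan is to exploit the fact that in the entanglement game a cop may be moved only onto the vertex currently occupied by the robber, so the two cops that the proof of Theorem~\ref{thm_dpw_bounded} simply parks on $r$ and $s$ must now be kept \emph{in reserve} and dropped onto $r$ and $s$ only when the robber himself steps there. The third cop will then play a winning strategy of the entanglement game on the graph that remains once $r$ and $s$ are sealed off.

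First I would record the structural fact that in $\calG_n-\{r,s\}$ the only cycles are the two-cycles on $\{d_i,e_i\}$, for $1\le i\le n$, together with the self-loop at $x$. This is a routine inspection of the edge list: once $r$ and $s$ are deleted, $c$ becomes a sink, so the corridor formed by the vertices $a_i$, $t_i$, $c$ carries no cycle; every edge out of $k_i$ leads either to $x$ or to some $g_j$ with $j>i$, so the $g$-$k$-layer moves strictly upward in the index and carries no cycle either; and the only remaining edges that can close a cycle are $d_i\to e_i$ and $e_i\to d_i$. Hence the non-trivial strongly connected components of $\calG_n-\{r,s\}$ are the $\{d_i,e_i\}$ and $\{x\}$, each of which becomes acyclic after deleting one vertex, so Lemma~\ref{lemma_ent_one} gives $\entm(\calG_n-\{r,s\})=1$. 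Concretely, one cop wins $\entg_1(\calG_n-\{r,s\})$ from any position by the rule: whenever the robber is about to cycle inside a gadget $\{d_i,e_i\}$ move the cop onto $d_i$ (the robber is sitting there), which drives the robber via $e_i\to h_i\to k_i$ into the $g$-$k$-layer or into the dead corridor; then follow the robber upward, re-planting on the current $d_j$ whenever necessary, and finally step onto $x$ once the robber reaches it. As the index can only increase and $n$ is finite, this play is finite.

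Next I would assemble the three-cop strategy on $\calG_n$ itself. Two cops are held in reserve, the third runs the one-cop rule above as if the board were $\calG_n-\{r,s\}$; the moment the robber enters $s$ we drop a reserve cop there permanently, similarly for $r$, and if the robber ever reaches $x$ we drop a free cop onto $x$ and win. The only edges of $\calG_n$ absent from $\calG_n-\{r,s\}$ are $d_i\to r$, $d_i\to s$, $t_i\to r$, $t_i\to s$, $c\to r$, $c\to s$; therefore whenever the robber moves towards $r$ or $s$, or enters the $a_i$-$t_i$-$c$ corridor, he is within finitely many moves forced either onto $r$ or $s$ (consuming one reserve cop) or onto $x$ (capture), and the first alternative can happen at most twice. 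Outside such excursions the robber stays in the gadget region (the vertices $d_i,e_i,f_i,g_i,h_i,k_i$), where the third cop makes progress exactly as before — a detour through $r$ or $s$ may bring the robber back to a gadget of smaller index, but the third cop can always re-plant on the robber's current $d_j$. Once both reserve cops are down the robber is confined to $\calG_n-\{r,s\}$ and the third cop's strategy terminates, so three cops capture the robber and $\entm(\calG_n)\le 3$.

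I expect the only genuine work to be the finiteness bookkeeping of the last step: one must verify that the robber's detours through $r$, $s$ and the $a_i$-$t_i$-$c$ corridor, before these vertices are sealed, can reset the third cop's progress at most twice — once per reserve cop — and that each detour is itself of finite length because the $t$-chain is finite and $x$ is a sink once occupied. The claim about the cycles of $\calG_n-\{r,s\}$ and the correctness of the one-cop rule are a straightforward case distinction.
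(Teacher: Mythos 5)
Your proposal is correct and follows essentially the same route as the paper: delete $r$ and $s$, invoke Lemma~\ref{lemma_ent_one} to obtain a one-cop winning strategy on the remaining graph (whose only nontrivial strongly connected components are the pairs $\{d_i,e_i\}$ and the self-loop at $x$), and spend the two remaining cops to occupy $r$ and $s$ permanently the first time the robber visits each of them. The paper's proof is just a more compressed version of the same phase argument --- run the one-cop strategy, interrupt it at most twice to plant a cop on $r$ or $s$, then run it one final time --- so your extra bookkeeping about the cycle structure and the finiteness of detours is harmless additional detail rather than a different method.
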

\begin{proof}
\newcommand{\G}{\ensuremath{\calG_n^{r,s}}}
Let $\G$ be the graph which is obtained from~$\calG_n$ by deleting vertices~$r$ 
and~$s$ and all adjacent edges, \ie $\G = \calG_n[V_n\setminus\{r,s\}]$. 
The only strongly connected components of $\G$ where the robber can remain are
the one induced by~$x$ and those induced by $\{d_i,e_i\}$. All other components
are singletons without self-loops, so the robber can stay there only for one
move.
Each of the components induced by $x$ or by $\{d_i,e_i\}$ have a vertex whose
removal 
makes the component acyclic. By Lemma~\ref{lemma_ent_one}, there is a 
strategy~$\sigma$ for one cop to catch the robber on~$\G$. Thus it suffices to
prove 
that the cops can occupy~$r$ and~$s$. They use one cop who moves according
to~$\sigma$
until the robber is captured or visits~$r$ or~$s$. Assume by the symmetry of
argumentation,
that the robber visits~$r$. A second cop follows him to~$r$ and remains there
until the end
of the play. Then the first cop plays according to~$\sigma$ again. As~$r$ is
occupied by a cop,
the robber is either captured, or visits~$s$. Then the last cop follows him
to~$s$. Finally,
the first cop plays according to~$\sigma$ for the last time and the robber loses
the play.
\end{proof}

We show that the \cw of $\calG_n$ is bounded as well.
Graph~$\calG_n$ can be decomposed into~$n$ layers, the $i$-th layer is
induced by vertices $g_{i}, f_{i}, e_{i}, d_{i}, h_{i} ,k_{i}$, $t_{2i}$,
$t_{2i-1}$,
$a_{2i}$, and $a_{2i-1}$. We construct~$\calG_n$ inductively over~$i=1,\dots,n$
connecting the new layer to the previous ones.
Simultaneously, we connect~$r$, and~$s$ to the~$i$-th layer. Then vertex~$x$ is
connected to the~graph.

\begin{theorem}\label{thm_cw_switch_all}
For all~$n>0$, we have~$\cwm(\calG_n) \le 10$.
\end{theorem}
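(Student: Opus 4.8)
The plan is to exhibit $\calG_n$ explicitly as a member of $\calC_{10}$, following the layered construction sketched before the theorem. Of the ten labels, six are reserved for roles that must survive from one layer to the next: a label $\ell_r$ that always holds exactly $r$, a label $\ell_s$ that always holds exactly $s$, a ``done'' label $\ell_0$ into which every vertex all of whose incident edges have already been produced gets parked (and which is never again used as an argument of the edge-adding operation~(4)), a label $\mathrm{A}$ holding all $a$-vertices produced so far, a label $\mathrm{K}$ holding all $k$-vertices produced so far, and a label $\mathrm{T}$ holding the single ``frontier'' vertex $t_{2i}$ of the most recently completed layer. The remaining four labels $w_1,\dots,w_4$ are working labels, empty before and after the construction of each layer. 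The global outline is: create $r,s$ and put them into $\ell_r,\ell_s$; build layers $1,\dots,n$ in order; finally introduce $x$, add the loop $x\to x$ by applying operation~(4) to its label, and connect $\mathrm{K},\ell_r,\ell_s$ into $x$ to obtain the edges $k_i\to x$, $r\to x$, $s\to x$.

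The heart of the argument is the passage from the graph already built for layers $1,\dots,i-1$ to the one for layers $1,\dots,i$, done in two blocks. In the first block I introduce $t_{2i},t_{2i-1},a_{2i},a_{2i-1}$ in the four working labels (for $i=1$ also the vertex $c$, handled separately together with $t_1\to c$, $c\to r$, $c\to s$), realise the intra-layer edges $t_{2i}\to t_{2i-1}$, $a_{2i}\to t_{2i}$, $a_{2i-1}\to t_{2i-1}$ and the edges $t_{2i-1}\to t_{2i-2}$ (via the old $\mathrm{T}$-vertex), $t_{2i-1}\to r,s$, $t_{2i}\to r,s$; then I recolour $t_{2i-2}$ and $t_{2i-1}$ into $\ell_0$, recolour $t_{2i}$ into $\mathrm{T}$, and recolour $a_{2i-1},a_{2i}$ into $\mathrm{A}$, so that now $\mathrm{A}=\{a_1,\dots,a_{2i}\}$. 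In the second block I introduce $d_i,e_i$, realise $d_i\to e_i$, $e_i\to d_i$, $d_i\to r$, $d_i\to s$ and the edges from $d_i$ into $\mathrm{A}$ (which, since $\mathrm{A}=\{a_1,\dots,a_{2i}\}$ at this moment, gives exactly $d_i\to a_j$ for $j\le 2i$), then park $d_i$ in $\ell_0$; introduce $f_i,h_i$, realise $f_i\to e_i$, $e_i\to h_i$, $s\to f_i$, then park $e_i$; introduce $g_i,k_i$, realise $g_i\to f_i$, $g_i\to k_i$, $h_i\to k_i$, the edges from $\mathrm{K}$ into $g_i$ (which gives $k_{i'}\to g_i$ for $i'<i$) and $r\to g_i$, park $f_i,h_i,g_i$ in $\ell_0$, and finally recolour $k_i$ into $\mathrm{K}$. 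A direct bookkeeping check shows that during a layer never more than four vertices sit outside the reserved labels, so the four working labels really do suffice.

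What remains is to check that this sequence of operations builds exactly $\calG_n$. Completeness is verified by going through the edge table and observing that each edge is created by precisely one application of operation~(4): the intra-layer edges while that layer is built; the cross-layer edges $t_{2i+1}\to t_{2i}$, $d_i\to a_j$ for $j\le 2(i-1)$, $k_{i'}\to g_i$ for $i'<i$, $s\to f_i$ and $r\to g_i$ when the layer of the later endpoint is built; and $r\to x$, $s\to x$, $k_i\to x$, $x\to x$ at the end. Soundness follows because every application of operation~(4) has a working label as one of its arguments, whose contents we control exactly, while $\ell_r,\ell_s$ always hold only $\{r\},\{s\}$, the labels $\mathrm{A},\mathrm{K},\mathrm{T}$ hold exactly the $a$-, $k$- and frontier-$t$-vertices they are supposed to, and $\ell_0$ is never an argument of operation~(4) --- so no edge outside $\calG_n$ is ever produced. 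I expect the only genuinely delicate point to be the ordering within a layer: each vertex may be recoloured to its parking label ($\ell_0$, $\mathrm{A}$, $\mathrm{K}$ or $\mathrm{T}$) only after all of its incident edges, including those to layers not yet built, have been or are certain to be created; this I would lay out explicitly as a table rather than in prose.
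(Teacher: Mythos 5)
Your construction is correct and follows essentially the same route as the paper's proof: an inductive, layer-by-layer assembly with persistent colours for $r$, $s$, the parked (done) vertices, the $a$-vertices, the $k$-vertices and the frontier $t$-vertex, plus a few working colours for the layer currently under construction. Your explicit $6+4$ bookkeeping is in fact slightly tidier than the paper's, which keeps an extra persistent colour $G$ on all $g_j$ (although their incident edges are already complete once their layer is finished) and uses a colour $D$ for $d_{i+1}$ in the construction that is then omitted from the final enumeration of ten colours.
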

\begin{proof}
We consider graph~$\calG_n$ as consisting of layers~$\calL_i$,
for~$i\in\{1,\dots,n\}$ where each~$\calL_i$ is induced by 
vertices~$d_i$,~$e_i$,~$f_i$,~$h_i$,~$k_i$,~$g_i$,~$a_{2i}$,~$a_{2i-1}$,~$t_{2i}
$,
and~$t_{2i-1}$.
The layers are produced for~$i=1,2\dots,n$ by induction on~$i$ and connected to
the previous layers.
Level~$1$ is constructed
in the same way as further layers (up to vertex~$c$, which is easy to produce), 
so we do not describe the base case explicitly.
Assume, all layers from~$\calL_1$ to~$\calL_i$ are constructed with following
labelling,
which is an invariant that holds after a layer is constructed, see the first
picture
on Figure~\ref{fig_cw} (connections from~$t_i$ to~$r$ and~$s$ are not shown). 
\begin{itemize}
\item For~$j\in\{1,\dots,2i-1\}$, all~$t_j$, and, for~$j\in\{1,\dots,i\}$, all
$d_j$,
$e_j$, $h_j$, and $f_j$ have colour $\done$.

\item $t_{2i}$ has colour $T$.

\item  For~$j\in\{1,\dots,2i\}$, all $a_j$, have colour $A$.

\item For $j\in\{1,\dots,i\}$, all~$k_j$ have colour~$K$, and all~$g_j$, have
colour~$G$ .

\item $r$ has colour~$R$ and~$s$ has colour~$S$.
\end{itemize}
We construct layer~$i+1$ satisfying the invariant. First, create
vertex~$a_{2i+1}$
with colour~$A$ and vertex~$t_{2i+1}$ with colour~$T'$ and connect $A\to T'$.
Then
take the union of the previous layers and $\{a_{2i+1},t_{2i+1}\}$ and
connect~$T'\to T$,
$T'\to R$ and $T'\to S$. Relabel~$T'\to T$. Then repeat the same procedure
with~$a_{2i+2}$ and~$t_{2i+2}$
instead of~$a_{2i+1}$ and~$t_{2i+1}$, see the second picture
on Figure~\ref{fig_cw}.

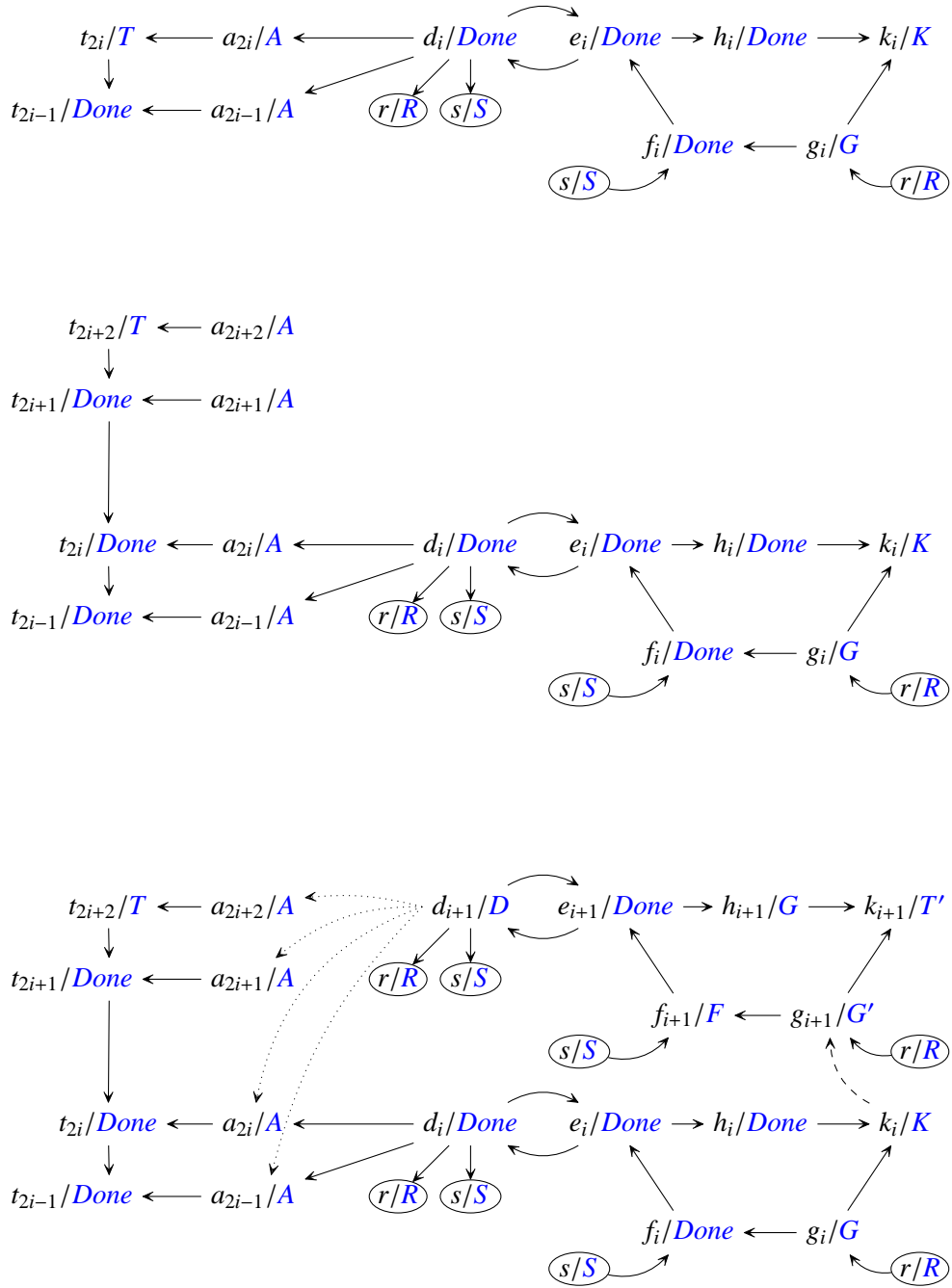
\begin{figure}
\begin{center}

\begin{tikzpicture}

\newcommand{\Done}{\color{blue}{Done}}
\newcommand{\T}{\color{blue}{T}}
\newcommand{\K}{\color{blue}{K}}
\newcommand{\G}{\color{blue}{G}}
\renewcommand{\R}{\color{blue}{R}}
\renewcommand{\S}{\color{blue}{S}}
\renewcommand{\A}{\color{blue}{A}}
\renewcommand{\F}{\color{blue}{F}}
\newcommand{\D}{\color{blue}{D}}

% level i

\node (d_i) at (0,0){$d_i/\Done$};
\node (e_i) at (2,0){$e_i/\Done$};
\node (h_i) at (4,0){$h_i/\Done$};
\node (k_i) at (6,0){$k_i/\K$};
\node (f_i) at (3,-1.5){$f_i/\Done$};
\node (g_i) at (5,-1.5){$g_i/\G$};
\node[draw,ellipse,inner sep=0] (r) at (6.2,-2){$r/\R$};
\node[draw,ellipse,inner sep=0] (s) at (1.5,-2){$s/\S$};

\draw[-slim,bend left] (d_i) to (e_i);
\draw[-slim,bend left] (e_i) to (d_i);
\draw[-slim] (e_i) to (h_i);
\draw[-slim] (h_i) to (k_i);
\draw[-slim] (g_i) to (k_i);
\draw[-slim] (g_i) to (f_i);
\draw[-slim] (f_i) to (e_i);
\draw[-slim,bend left] (r) to (g_i);
\draw[-slim,bend right] (s) to (f_i);

\node[draw,ellipse,inner sep=0] (r) at (-1,-1){$r/\R$};
\node[draw,ellipse,inner sep=0] (s) at (0,-1){$s/\S$};

\draw[-slim] (d_i) to (r);
\draw[-slim] (d_i) to (s);

\node (t_2i_1) at (-5.5,-1){$t_{2i-1}/\Done$};
\node (t_2i) at (-5,0){$t_{2i}/\T$};
\node (a_2i_1) at (-3,-1){$a_{2i-1}/\A$};
\node (a_2i) at (-3,0){$a_{2i}/\A$};

\draw[-slim] (a_2i_1) to (t_2i_1);
\draw[-slim] (a_2i) to (t_2i);
\draw[-slim] (d_i) to (a_2i);
\draw[-slim] (d_i) to (a_2i_1);
\draw[-slim] (t_2i) to (t_2i_1.30);

% adding t and a

\node (d_i) at (0,-7){$d_i/\Done$};
\node (e_i) at (2,-7){$e_i/\Done$};
\node (h_i) at (4,-7){$h_i/\Done$};
\node (k_i) at (6,-7){$k_i/\K$};
\node (f_i) at (3,-8.5){$f_i/\Done$};
\node (g_i) at (5,-8.5){$g_i/\G$};
\node[draw,ellipse,inner sep=0] (r) at (6.2,-9){$r/\R$};
\node[draw,ellipse,inner sep=0] (s) at (1.5,-9){$s/\S$};

\draw[-slim,bend left] (d_i) to (e_i);
\draw[-slim,bend left] (e_i) to (d_i);
\draw[-slim] (e_i) to (h_i);
\draw[-slim] (h_i) to (k_i);
\draw[-slim] (g_i) to (k_i);
\draw[-slim] (g_i) to (f_i);
\draw[-slim] (f_i) to (e_i);
\draw[-slim,bend left] (r) to (g_i);
\draw[-slim,bend right] (s) to (f_i);

\node[draw,ellipse,inner sep=0] (r) at (-1,-8){$r/\R$};
\node[draw,ellipse,inner sep=0] (s) at (0,-8){$s/\S$};

\draw[-slim] (d_i) to (r);
\draw[-slim] (d_i) to (s);

\node (t_2i_1) at (-5.5,-8){$t_{2i-1}/\Done$};
\node (t_2i) at (-5,-7){$t_{2i}/\Done$};
\node (a_2i_1) at (-3,-8){$a_{2i-1}/\A$};
\node (a_2i) at (-3,-7){$a_{2i}/\A$};

\draw[-slim] (a_2i_1) to (t_2i_1);
\draw[-slim] (a_2i) to (t_2i);
\draw[-slim] (d_i) to (a_2i);
\draw[-slim] (d_i) to (a_2i_1);
\draw[-slim] (t_2i) to (t_2i_1.30);

\node (t_2i_p1) at (-5.5,-5){$t_{2i+1}/\Done$};
\node (t_2i_p2) at (-5,-4){$t_{2i+2}/\T$};
\node (a_2i_p1) at (-3,-5){$a_{2i+1}/\A$};
\node (a_2i_p2) at (-3,-4){$a_{2i+2}/\A$};

\draw[-slim] (a_2i_p1) to (t_2i_p1);
\draw[-slim] (a_2i_p2) to (t_2i_p2);
\draw[-slim] (t_2i_p2) to (t_2i_p1.30);
\draw[-slim] (t_2i_p1.-30) to (t_2i);

% adding d,e,...

\node (d_i_p1) at (0,-12){$d_{i+1}/\D$};
\node (e_i) at (2,-12){$e_{i+1}/\Done$};
\node (h_i) at (4,-12){$h_{i+1}/\G$};
\node (k_i) at (6,-12){$k_{i+1}/\T'$};
\node (f_i) at (3,-13.5){$f_{i+1}/\F$};
\node (g_i_p1) at (5,-13.5){$g_{i+1}/\G'$};
\node[draw,ellipse,inner sep=0] (r) at (6.2,-14){$r/\R$};
\node[draw,ellipse,inner sep=0] (s) at (1.5,-14){$s/\S$};

\draw[-slim,bend left] (d_i_p1) to (e_i);
\draw[-slim,bend left] (e_i) to (d_i_p1);
\draw[-slim] (e_i) to (h_i);
\draw[-slim] (h_i) to (k_i);
\draw[-slim] (g_i_p1) to (k_i);
\draw[-slim] (g_i_p1) to (f_i);
\draw[-slim] (f_i) to (e_i);
\draw[-slim,bend left] (r) to (g_i_p1);
\draw[-slim,bend right] (s) to (f_i);

\node[draw,ellipse,inner sep=0] (r) at (-1,-13){$r/\R$};
\node[draw,ellipse,inner sep=0] (s) at (0,-13){$s/\S$};

\draw[-slim] (d_i_p1) to (r);
\draw[-slim] (d_i_p1) to (s);

\node (d_i) at (0,-15){$d_i/\Done$};
\node (e_i) at (2,-15){$e_i/\Done$};
\node (h_i) at (4,-15){$h_i/\Done$};
\node (k_i) at (6,-15){$k_i/\K$};
\node (f_i) at (3,-16.5){$f_i/\Done$};
\node (g_i) at (5,-16.5){$g_i/\G$};
\node[draw,ellipse,inner sep=0] (r) at (6.2,-17){$r/\R$};
\node[draw,ellipse,inner sep=0] (s) at (1.5,-17){$s/\S$};

\draw[-slim,bend left] (d_i) to (e_i);
\draw[-slim,bend left] (e_i) to (d_i);
\draw[-slim] (e_i) to (h_i);
\draw[-slim] (h_i) to (k_i);
\draw[-slim] (g_i) to (k_i);
\draw[-slim] (g_i) to (f_i);
\draw[-slim] (f_i) to (e_i);
\draw[-slim,bend left] (r) to (g_i);
\draw[-slim,bend right] (s) to (f_i);

\node[draw,ellipse,inner sep=0] (r) at (-1,-16){$r/\R$};
\node[draw,ellipse,inner sep=0] (s) at (0,-16){$s/\S$};

\draw[-slim] (d_i) to (r);
\draw[-slim] (d_i) to (s);

\node (t_2i_1) at (-5.5,-16){$t_{2i-1}/\Done$};
\node (t_2i) at (-5,-15){$t_{2i}/\Done$};
\node (a_2i_1) at (-3,-16){$a_{2i-1}/\A$};
\node (a_2i) at (-3,-15){$a_{2i}/\A$};

\draw[-slim] (a_2i_1) to (t_2i_1);
\draw[-slim] (a_2i) to (t_2i);
\draw[-slim] (d_i) to (a_2i);
\draw[-slim] (d_i) to (a_2i_1);
\draw[-slim] (t_2i) to (t_2i_1.30);

\node (t_2i_p1) at (-5.5,-13){$t_{2i+1}/\Done$};
\node (t_2i_p2) at (-5,-12){$t_{2i+2}/\T$};
\node (a_2i_p1) at (-3,-13){$a_{2i+1}/\A$};
\node (a_2i_p2) at (-3,-12){$a_{2i+2}/\A$};

\draw[-slim] (a_2i_p1) to (t_2i_p1);
\draw[-slim] (a_2i_p2) to (t_2i_p2);
\draw[-slim] (t_2i_p2) to (t_2i_p1.30);
\draw[-slim] (t_2i_p1.-30) to (t_2i);

%%%% connection between levels

\draw[-slim,bend left=30,dashed] (k_i.150) to (g_i_p1);
\draw[-slim,bend right=10,dotted] (d_i_p1.180) to (a_2i_p2);
\draw[-slim,bend right=20,dotted] (d_i_p1.180) to (a_2i_p1);
\draw[-slim,bend right=30,dotted] (d_i_p1.180) to (a_2i);
\draw[-slim,bend right=15,dotted] (d_i_p1.180) to (a_2i_1.50);

\end{tikzpicture}
\caption{Construction of $\calG_n$ with ten colours.}
\label{fig_cw}
\end{center}
\end{figure}

\begin{figure}[H]
\begin{center}

\begin{minipage}{0.6\linewidth}
\begin{tikzpicture}[scale=0.7]

\draw[use as bounding box,opacity=0,red] (-5,-3.5) rectangle (5,21);

\small

\newlength{\hoch}
\setlength{\hoch}{1.5cm}
% BEGIN first layer
\node[vertex] (k1) at (0,0) {$k_1$};
\node[vertex] (c10) at (-1,0.25) {$c_1^0$};
\node[vertex] (c11) at (1,0.25) {$c_1^1$};
\node[vertex] (A10) at ($ (c10) + (0,\hoch)  $) {$A_1^0$};
\node[vertex] (A11) at ($ (c11) + (0,\hoch)  $) {$A_1^1$};
\newlength{\cycle}
\setlength{\cycle}{0.8cm}
\node[vertex] (b110) at ($ (A10) + (-1,-\cycle)  $) {$b_{1,1}^0$};
\node[vertex] (b100) at ($ (A10) + (-1,\cycle)  $) {$b_{1,0}^0$};
\node[vertex] (b111) at ($ (A11) + (1,-\cycle)  $) {$b_{1,1}^1$};
\node[vertex] (b101) at ($ (A11) + (1,\cycle)  $) {$b_{1,0}^1$};

\node[dvertex] (k10) at ($ (b100) + (-1.5,0)  $) {$k_{[1;3]}$};
\node[dvertex] (k11) at ($ (b101) + (1.5,0)  $) {$k_{[1;3]}$};
\node[dvertex] (t10) at ($ (b110) + (-1.5,0)  $) {$t$};
\node[dvertex] (t11) at ($ (b111) + (1.5,0)  $) {$t$};

\node[vertex] (d10) at ($ (A10) + (0,2)  $) {$d_{1}^0$};
\node[vertex] (h10) at ($ (d10) + (0,1.2)  $) {$h_{1}^0$};
\node[vertex] (d11) at ($ (A11) + (0,2)  $) {$d_{1}^1$};
\node[vertex] (h11) at ($ (d11) + (0,1.2)  $) {$h_{1}^1$};

\node[dvertex] (s10) at ($ (d10) + (-1.3,0)  $) {$s$};
\node[dvertex] (s11) at ($ (d11) + (1.3,0)  $) {$s$};
\node[dvertex] (th1) at ($ (h10) + (-1.3,0)  $) {$t$};
% END first layer

\newlength{\x}
\setlength{\x}{0cm}
\newlength{\y}
\setlength{\y}{6.5cm}
% BEGIN second layer
\node[dvertex] (t2) at ($ (k1) + (\x,\y) + (2cm,-0.5cm) $) {$t$};
\node[vertex] (k2) at ($ (k1) + (\x,\y)  $) {$k_2$};
\node[vertex] (c20) at ($ (c10) + (\x,\y)  $) {$c_2^0$};
\node[vertex] (c21) at ($ (c11) + (\x,\y)  $) {$c_2^1$};
\node[vertex] (A20) at ($ (A10) + (\x,\y)  $) {$A_2^0$};
\node[vertex] (A21) at ($ (A11) + (\x,\y)  $) {$A_2^1$};

\node[vertex] (b210) at ($ (b110) + (\x,\y)  $) {$b_{2,1}^0$};
\node[vertex] (b200) at ($ (b100) + (\x,\y)  $) {$b_{2,0}^0$};
\node[vertex] (b211) at ($ (b111) + (\x,\y)  $) {$b_{2,1}^1$};
\node[vertex] (b201) at ($ (b101) + (\x,\y)  $) {$b_{2,0}^1$};

\node[dvertex] (k20) at ($ (k10) + (\x,\y)  $) {$k_{[1;3]}$};
\node[dvertex] (k21) at ($ (k11) + (\x,\y)  $) {$k_{[1;3]}$};
\node[dvertex] (t20) at ($ (t10) + (\x,\y)  $) {$t$};
\node[dvertex] (t21) at ($ (t11) + (\x,\y)  $) {$t$};

\node[vertex] (d20) at ($ (d10) + (\x,\y)  $) {$d_{2}^0$};
\node[vertex] (h20) at ($ (h10) + (\x,\y)  $) {$h_{2}^0$};
\node[vertex] (d21) at ($ (d11) + (\x,\y)  $) {$d_{2}^1$};
\node[vertex] (h21) at ($ (h11) + (\x,\y)  $) {$h_{2}^1$};

\node[dvertex] (s20) at ($ (d20) + (-1.3,0)  $) {$s$};
\node[dvertex] (s21) at ($ (d21) + (1.3,0)  $) {$s$};
\node[dvertex] (th2) at ($ (h20) + (-1.3,0)  $) {$t$};
% END second layer

\newlength{\xx}
\setlength{\xx}{0cm}
\newlength{\yy}
\setlength{\yy}{13cm}
% BEGIN third layer
\node[dvertex] (t3) at ($ (k1) + (\xx,14.3cm)  $) {$t$};
\node[vertex] (k3) at ($ (k1) + (\xx,\yy)  $) {$k_3$};
\node[vertex] (c30) at ($ (c10) + (\xx,\yy)  $) {$c_3^0$};
\node[vertex] (c31) at ($ (c11) + (\xx,\yy)  $) {$c_3^1$};
\node[vertex] (A30) at ($ (A10) + (\xx,\yy)  $) {$A_3^0$};
\node[vertex] (A31) at ($ (A11) + (\xx,\yy)  $) {$A_3^1$};

\node[vertex] (b310) at ($ (b110) + (\xx,\yy)  $) {$b_{3,1}^0$};
\node[vertex] (b300) at ($ (b100) + (\xx,\yy)  $) {$b_{3,0}^0$};
\node[vertex] (b311) at ($ (b111) + (\xx,\yy)  $) {$b_{3,1}^1$};
\node[vertex] (b301) at ($ (b101) + (\xx,\yy)  $) {$b_{3,0}^1$};

\node[dvertex] (k30) at ($ (k10) + (\xx,\yy)  $) {$k_{[1;3]}$};
\node[dvertex] (k31) at ($ (k11) + (\xx,\yy)  $) {$k_{[1;3]}$};
\node[dvertex] (t30) at ($ (t10) + (\xx,\yy)  $) {$t$};
\node[dvertex] (t31) at ($ (t11) + (\xx,\yy)  $) {$t$};

\node[vertex] (d30) at ($ (d10) + (\xx,\yy)  $) {$d_{3}^0$};
\node[vertex] (h30) at ($ (h10) + (\xx,\yy)  $) {$h_{3}^0$};
\node[vertex] (d31) at ($ (d11) + (\xx,\yy)  $) {$d_{3}^1$};
\node[vertex] (h31) at ($ (h11) + (\xx,\yy)  $) {$h_{3}^1$};

\node[dvertex] (s30) at ($ (d30) + (-1.3,0)  $) {$s$};
\node[dvertex] (s31) at ($ (d31) + (1.3,0)  $) {$s$};
\node[dvertex] (th3) at ($ (h30) + (-1.3,0)  $) {$t$};
% END third layer

% BEGIN special nodes
\node[vertex] (s) at (0,-2.8) {$s$};
\node[dvertex] (ts) at (-1,-1.7) {$t$};
\node[dvertex] (ks) at (1,-1.7) {$k_{1;3}$};
\draw[-slim] (s) to (ts);
\draw[-slim] (s) to (ks);

\node[vertex] (t) at (-1,19.5) {$t$};
\node[vertex] (k4) at (1,19.5) {$k_4$};
\draw[-slim] (k4) to (t);
\draw[-slim] (h31) to (k4);
\draw[slim loop] (t) to (t);
% END special nodes

% BEGIN connections 1st layer
\draw[-slim] (k1) to (ts);
\draw[-slim] (k1) to (c10);
\draw[-slim] (k1) to (c11);
\draw[-slim] (c10) to (A10);
\draw[-slim] (c11) to (A11);

\draw[-slim] (A10) to (d10);
\draw[-slim] (A11) to (d11);
\draw[-slim] (d10) to (h10);
\draw[-slim] (d11) to (h11);

\draw[-slim,bend left] (b110) to (A10);
\draw[-slim,bend left] (b100) to (A10);
\draw[-slim,bend left] (A10) to (b110);
\draw[-slim,bend left] (A10) to (b100);

\draw[-slim,bend left] (b111) to (A11);
\draw[-slim,bend left] (b101) to (A11);
\draw[-slim,bend left] (A11) to (b111);
\draw[-slim,bend left] (A11) to (b101);

\draw[-slim] (b110) to (k10);
\draw[-slim] (b100) to (k10);
\draw[-slim] (b110) to (t10);
\draw[-slim] (b100) to (t10);

\draw[-slim] (b111) to (k11);
\draw[-slim] (b101) to (k11);
\draw[-slim] (b111) to (t11);
\draw[-slim] (b101) to (t11);

\draw[-slim] (d10) to (s10);
\draw[-slim] (d11) to (s11);
\draw[-slim] (h10) to (th1);
% end connections 1st layer

% BEGIN connections second layer
\draw[-slim] (k2) to (t2);
\draw[-slim] (k2) to (c20);
\draw[-slim] (k2) to (c21);
\draw[-slim] (c20) to (A20);
\draw[-slim] (c21) to (A21);

\draw[-slim] (A20) to (d20);
\draw[-slim] (A21) to (d21);
\draw[-slim] (d20) to (h20);
\draw[-slim] (d21) to (h21);

\draw[-slim,bend left] (b210) to (A20);
\draw[-slim,bend left] (b200) to (A20);
\draw[-slim,bend left] (A20) to (b210);
\draw[-slim,bend left] (A20) to (b200);

\draw[-slim,bend left] (b211) to (A21);
\draw[-slim,bend left] (b201) to (A21);
\draw[-slim,bend left] (A21) to (b211);
\draw[-slim,bend left] (A21) to (b201);

\draw[-slim] (b210) to (k20);
\draw[-slim] (b200) to (k20);
\draw[-slim] (b210) to (t20);
\draw[-slim] (b200) to (t20);

\draw[-slim] (b211) to (k21);
\draw[-slim] (b201) to (k21);
\draw[-slim] (b211) to (t21);
\draw[-slim] (b201) to (t21);

\draw[-slim] (d20) to (s20);
\draw[-slim] (d21) to (s21);
\draw[-slim] (h20) to (th2);
% end connections second layer

% BEGIN connections third layer
\draw[-slim] (k3) to (t3);
\draw[-slim] (k3) to (c30);
\draw[-slim] (k3) to (c31);
\draw[-slim] (c30) to (A30);
\draw[-slim] (c31) to (A31);

\draw[-slim] (A30) to (d30);
\draw[-slim] (A31) to (d31);
\draw[-slim] (d30) to (h30);
\draw[-slim] (d31) to (h31);

\draw[-slim,bend left] (b310) to (A30);
\draw[-slim,bend left] (b300) to (A30);
\draw[-slim,bend left] (A30) to (b310);
\draw[-slim,bend left] (A30) to (b300);

\draw[-slim,bend left] (b311) to (A31);
\draw[-slim,bend left] (b301) to (A31);
\draw[-slim,bend left] (A31) to (b311);
\draw[-slim,bend left] (A31) to (b301);

\draw[-slim] (b310) to (k30);
\draw[-slim] (b300) to (k30);
\draw[-slim] (b310) to (t30);
\draw[-slim] (b300) to (t30);

\draw[-slim] (b311) to (k31);
\draw[-slim] (b301) to (k31);
\draw[-slim] (b311) to (t31);
\draw[-slim] (b301) to (t31);

\draw[-slim] (d30) to (s30);
\draw[-slim] (d31) to (s31);
\draw[-slim] (h30) to (th3);
% end connections third layer

% begin conncections between layers
\draw[-slim] (h11) to ($ (h11) + (0,1) $) to (k2);
\draw[-slim] (h21) to ($ (h21) + (0,1) $) to (k3);
\draw[-slim] (h10) to ($ (h10) + (0,0.5) $) to
($ (h10) + (-3.5,0.5) $) to
($ (h10) + (-3.5,7) $) to (k3);

\draw[-slim] (k1) to ($ (k1) + (-0.3,0.6) $) to
($ (k1) + (-0.3,5.7) $) to
($ (k1) + (-4.1,5.7) $) to
($ (k1) + (-4.1,11.5) $) to (k3);
\draw[-slim] ($ (k1) + (-0.3,5.7) $) to (k2);

\draw[-slim] (k2) to ($ (k1) + (0.3,5.7) $) to
($ (k1) + (0.3,0.6) $) to (k1);
\draw[-slim] (k2) to ($ (k2) + (-0.3,0.6) $) to
($ (k2) + (-0.3,5.7) $) to (k3);

\draw[-slim] (k3) to ($ (k2) + (0.3,5.7) $) to
($ (k2) + (0.3,0.6) $) to (k2);
\draw[-slim] (k3) to ($ (k1) + (4.1,11.5) $) to
($ (k1) + (4.1,-0.8) $) to ($ (k1) + (0,-0.8) $) to (k1);
% end connections between layers

\end{tikzpicture}
\end{minipage}
\begin{minipage}{0.25\linewidth}
\begin{center}
   
\begin{tabular}{|c|c|c|c|}
\hline
Node 	& Successors \\
\hline
$d_{i}^{j}$ & $h_{i}^{j},s$  \\
$A_{i}^{j}$ & $d_{i}^{j},b_{i,0}^{j},b_{i,1}^{j}$ \\
$b_{i,*}^{j}$ & $t,A_{i}^{j},k_{[1;n]}$ \\
$t$ & $t$ \\
$s$ & $t,k_{[1;n]}$ \\
$k_{n+1}$ & $t$\\
$k_i$ & $c_{i}^{0}c_{i}^{1},t,k_{[1;n]}$\\
$h_{i}^{0}$ & $t,k_{[i+2;n]}$\\
$h_{i}^{1}$ & $k_{i+1}$\\
$c_{i}^{j}$ & $A_{i}^{j}$\\
\hline
\end{tabular}
\end{center}

\end{minipage}
\caption{The graph~$\calZ_3$ and the edge relation of~$\calZ_n$ for the
counterexample to Zadeh's least-entered rule.}
\label{fig_Zadeh}
\end{center}
\end{figure}
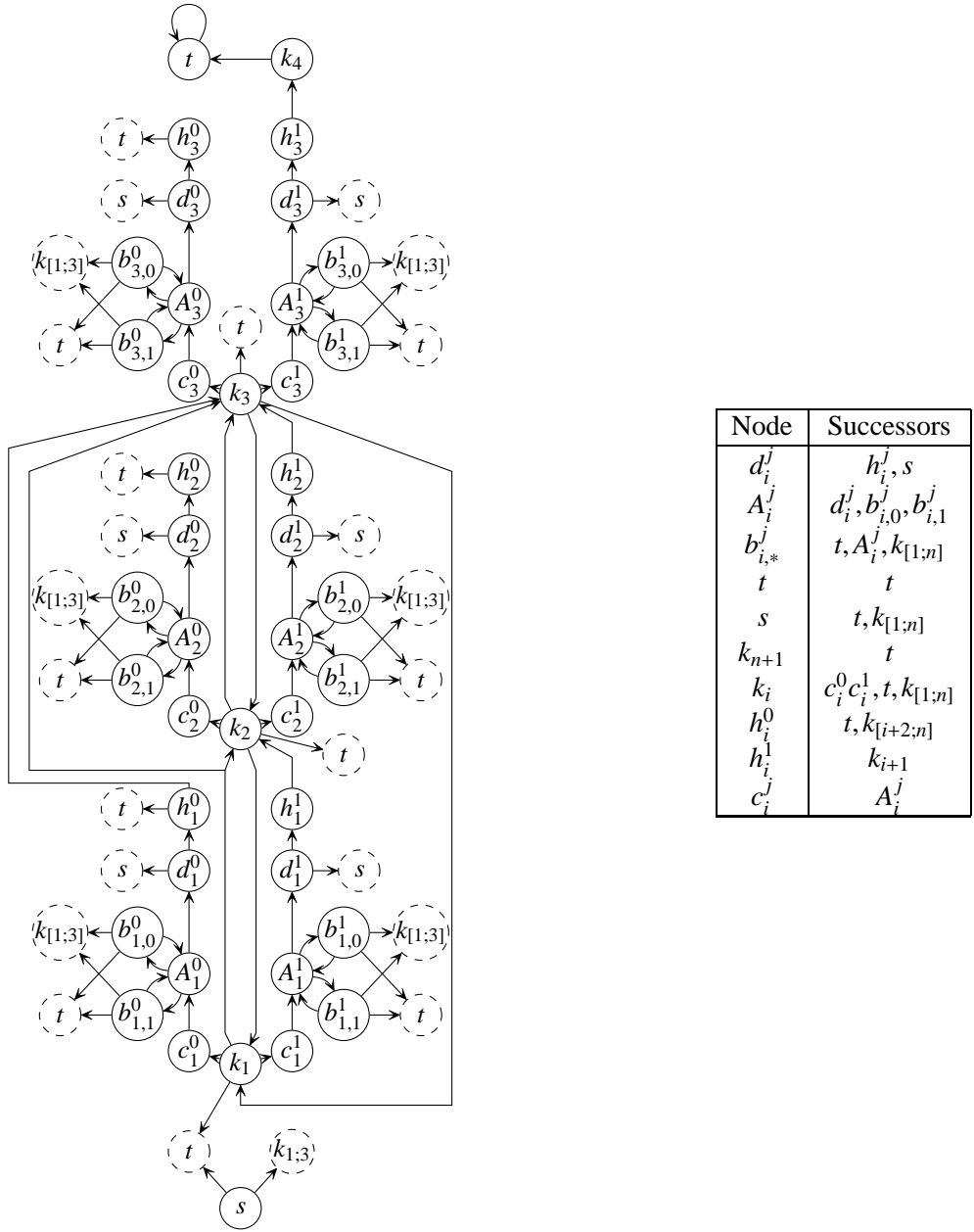

Now we construct the subgraph~$\calC_{i+1}$ induced 
by $d_{i+1}$, $e_{i+1}$, $h_{i+1}$, $g_{i+1}$, and~$f_{i+1}$ using 
colours $D$, $\done$, $G$, $T'$, $G'$, and~$F$. Note that colours~$\done$
and~$T'$
are reused.
Produce $d_{i+1}$, $e_{i+1}$, $f_{i+1}$, $g_{i+1}$, $k_{i+1}$, and~$h_{i+1}$
with
labels $D$, $\done$, $F$, $G'$, $T'$, and~$G$, respectively, and connect them as
needed,
also to~$r$ and to~$s$ (see the third picture on Figure~\ref{fig_cw}).

Relabel~$G\to \done$. Build the disjoint union of~$\calC_{i+1}$ and the already
constructed graph.
Connect~$K\to G'$ (which connects all~$k_j$, for $j<i+1$ to~$g_{i+1}$; dashed
line in the figure),
and relabel~$T'\to k$ and~$G'\to G$. Connect~$D\to A$ (which connects~$d_{i+1}$
to all~$a_j$, 
for $j\le 2i$, dotted lines in the figure) and relabel~$D\to\done$.
This finishes the construction of layer~$L_{i+1}$. Note that the properties from
the invariant hold
for~$L_{i+1}$. Finally, produce vertex~$x$ with colour $T'$ and connect
all~$k_i$ to~$x$ and~$x$ to itself. 
It remains to count the colours. We used $\done$, $T$, $T'$, $A$, $F$, $G$,
$G'$, $K$, $R$, and $S$,
which makes ten colours.

\end{proof}

\subsection{Zadeh's least-entered rule}

As a second example we discuss the counterexample of Friedmann
against Zadeh's least-entered rule. The underlying game graphs
are denoted $\calZ_n$. As $\calG_n$
the graph $\calZ_n$ can be decomposed into~$n$ layers, see
Figure~\ref{fig_Zadeh} for graph~$\calZ_3$ and a list of edges of~$\calZ_n$.
Vertices~$k_i$,
~$c_i^0$,~$A_i^0$,~$b_{i,1}^0$,~$b_{i,0}^0$,~$d_i^0$,~$h_i^0$,
~$c_i^1$,~$A_i^1$,~$b_{i,1}^1$,~$b_{i,0}^1$,~$d_i^1$, and~$h_i^1$ induce the
$i$-th layer.
The subgraphs induced
by~$c_i^0$,~$A_i^0$,~$b_{i,1}^0$,~$b_{i,0}^0$,~$d_i^0$,~$h_i^0$ and
~$c_i^1$,~$A_i^1$,~$b_{i,1}^1$,~$b_{i,0}^1$,~$d_i^1$,~$h_i^1$ are isomorphic to
each other.

% The basic idea of the counterexample~$\calZ_n$ is the same as for~$\calG_n$.
A run of the strategy improvement alogrithm on~$\calZ_n$ simulates an $n$-bit
counter with values from~0 to~$2^n-1$. The difference to the switch-all rule is
that the least-entered rule chooses an improving edge that has been switched
least often.
Because the lower bits of an $n$-bit counter are switched more often,
the higher bits would be switched before they should in order to catch
up with the lower bits. This means that the $n$-bit counter would not go through
all the steps from~0 to~$2^n-1$. Friedmann solved this problem by representing
each
bit~$i$ by two bits,~$i_0$ and~$i_1$. The associated structures in~$\calZ_n$ are
the
gadgets induced by $\{A_i^0,b_{i,1}^0,b_{i,0}^0\}$ and
$\{A_i^1,b_{i,1}^1,b_{i,0}^1\}$
respectively. The bit~$i_j$ is considered to be set, if the current
Player~0 strategy chooses both edges~$(b_{i,0}^j, A_i^j)$
and~$(b_{i,1}^j, A_i^j)$, and unset otherwise. In a run of the algorithm
only one of the bits~$i_0$ and~$i_1$ is active and able to
effect the rest of the counter at
the time. The inactive bit can, in the meantime, switch back and forth from~0
to~1 in order to catch up with the rest of the counter without having an effect
on it.

The counterexample contains a vertex~$k_i$ in each layer~$i$ such that
all~$k_i$ induce an $n$-clique in the graph. This makes all values of measures
that
describe cyclicity (\ie, \tw, \dpw, \dagw, and \kw) unbounded on the class of
the counterexample graphs, but \cw of the graphs is still small. 

\begin{theorem}\label{thm_cw_zadeh}
For all~$n>0$, we have~$\cwm(\calZ_n) \le 9$.
\end{theorem}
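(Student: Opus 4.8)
The plan is to establish $\calZ_n\in\calC_9$ directly, by an inductive construction that closely parallels the one behind Theorem~\ref{thm_cw_switch_all}. I would build the underlying graph $(V,E)$ of $\calZ_n$ layer by layer, for $i=1,\dots,n$, each time adjoining the $i$-th layer — the vertex $k_i$ together with the two isomorphic gadgets on $\{c_i^j,A_i^j,b_{i,0}^j,b_{i,1}^j,d_i^j,h_i^j\}$ for $j\in\{0,1\}$ — to the part already built, while maintaining a fixed labelling invariant on the current (labelled) graph. Since the two halves $j=0$ and $j=1$ of a layer meet only at $k_i$ and through their edges into the clique on $\{k_1,\dots,k_n\}$, they can be produced one after the other using the same handful of auxiliary colours.

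The labelling would use nine colours: $\done$ for vertices all of whose incident edges are already present; $T$ for the sink $t$ and $S$ for $s$ (both kept throughout, since many vertices point to each of them and these edges must be inserted one by one); $K$ for the clique vertices $k_1,\dots,k_i$ constructed so far and $B$ for the ``fresh'' $k_i$ being incorporated in the current step; three pending colours recording which vertices still owe edges to $k$-vertices not yet created — $P_0$ (owes an edge to \emph{every} future $k$: this is where the $b_{i,*}^j$ and $s$ live, and also the ``aged'' $h_m^0$), $P_1$ (owes an edge to every future $k$ \emph{except the next one}: this is $h_i^0$, since $h_i^0\to k_{i+2},\dots,k_n$ but $h_i^0\not\to k_{i+1}$), and $P_2$ (owes an edge to exactly the next $k$ and is then done: this is $h_i^1$); and one scratch colour $X$ for wiring inside a layer.

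For one layer the steps would be: (a) introduce a fresh $B$-port $k_i$ and realise its clique edges to the $K$-ports (both directions, plus the self-loop) and its edge to $t$; (b) connect all $P_0$-ports, all $P_2$-ports, and the $S$-port to $B$, so that exactly the earlier vertices that should acquire an edge to $k_i$ do so — crucially the $P_1$-ports are \emph{not} connected, so the current $h_{i-1}^0$ correctly skips $k_i$; (c) ``age'' the pending colours by recolouring $P_2\to\done$ (the old $h_{i-1}^1$ is finished) and $P_1\to P_0$, so that $P_1$ and $P_2$ become idle; (d) build the two halves in turn, in each half realising the chain $k_i\to c_i^j\to A_i^j\to d_i^j\to h_i^j$ together with the edges $A_i^j\leftrightarrow b_{i,*}^j$, the edges from $b_{i,*}^j$ to $t$ and to the $k$-vertices built so far, and $d_i^j\to s$, using $X$ together with the idle colours $P_1,P_2$ as recyclable scratch, and finishing each half with the $b$'s recoloured to $P_0$, the $c,A,d$ vertices to $\done$, and $h_i^0$ (resp.\ $h_i^1$) parked in $P_1$ (resp.\ $P_2$); (e) recolour $B\to K$. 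A final clean-up creates the stray vertex $k_{n+1}$ (its only edge is to $t$, and it is the target of the $P_2$-port $h_n^1$) and recolours the surviving $P_1,P_2$ vertices to $\done$. One then checks by induction that the invariant is restored after step (e) and that the operations have added exactly the edges of $\calZ_n$ among the built vertices (for $i=1$ the pending colours are empty and steps (b),(c) degenerate).

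The main obstacle — and the reason the count is nine and no smaller — is the global character of the $k$-clique: its vertices are spread across all $n$ layers, so every vertex pointing into it carries a ``pending'' obligation that can only be discharged incrementally as the clique grows, and these obligations come in three genuinely different flavours (all of the clique for the $b$'s and $s$; the suffix $k_{i+2},\dots,k_n$ for $h_i^0$; the singleton $k_{i+1}$ for $h_i^1$), which forces three distinct pending colours and the aging in step (c). Simultaneously the two isomorphic halves of a layer must stay separable, and during the second half the colour $P_1$ is already occupied by that layer's $h_i^0$, so the within-layer wiring has to be squeezed into just $X$ and $P_2$; this is why it is essential to recolour the $b$'s to $P_0$ \emph{before} creating $d_i^j$ (keeping the live frontier of a half down to two vertices) and why the $h$-vertices are parked precisely in the two pending colours that are otherwise idle. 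Verifying that this bookkeeping genuinely never needs a tenth colour, and that no operation of the form ``connect all $P_0$-ports to\,\dots'' (or ``all $K$-ports to\,\dots'') ever produces a spurious edge, is the part that demands the most care; the remainder is a routine, if lengthy, check against the edge list of $\calZ_n$. (It is nine rather than the ten of Theorem~\ref{thm_cw_switch_all} essentially because $\calZ_n$ has only the two ``hub'' vertices $t$ and $s$, in place of $x$, $r$, $s$ in $\calG_n$.)
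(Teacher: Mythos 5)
Your construction is correct and follows essentially the same route as the paper's proof: an inductive, layer-by-layer clique-width expression maintained with a labelling invariant, in which the three genuinely different pending obligations of the $h$-vertices are handled exactly as in the paper (your $P_1$, $P_2$ and the aged part of $P_0$ play the roles of the paper's $H_l$, $H_r$ and $H$). The only divergence is bookkeeping: the paper creates $t$ and $s$ last and discharges the uniform edge families ($b\to k_{[1;n]}$, $s\to k_{[1;n]}$, $k\to t$, $d\to s$, \dots) in a single batch of connect operations at the very end, whereas you create $t$ and $s$ first and discharge these obligations incrementally through $P_0$ and $S$ --- both variants come to nine colours.
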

\begin{proof}
The proof is very similar to the proof of Theorem~\ref{thm_cw_switch_all}.
We regard the graphs~$\calZ_n$ as consisting of layers~$\calL_i$ that are
induced by vertices~$k_i$,
$c_i^0$, $A_i^0$, $b_{i,0}^j$, $b_{i,1}^j$, $d_i^j$, and~$h_i^j$,
for $i\in\{1,\dots,n\}$ and~$j\in\{0,1\}$.
The layers are constructed for~$i=1,2,\dots,n$ by induction on~$i$
and connected to the previous layers. 
In the induction step, we build a new layer and connect it to the previous ones.
Finally, we add
the vertex~$s$ and the top layer, that consists of~$t$ and~$k_{n+1}$, and
establish
the connections to the other~$n$ layers.

As in the proof of Theorem~\ref{thm_cw_switch_all}, layer~$\calL_1$ is 
constructed in the same
way as further layers. Assume, layers from~$\calL_1$ to~$\calL_i$ have
been constructed with the following labelling, which is an invariant that
holds after a new layer is constructed.
\begin{itemize}
\item For $j\in\{1,\dots,i\}$ and $s,s'\in\{1,2\}$, all~$k_j$ have colour~$K$,
all~$A_j^s$ and~$c_j^s$ have colour~$Done$, all~$d_j^s$ have colour~$D$, and
all~$b_{i,s'}^s$ have colour~$B$.

\item For $j\in\{1,\dots,i-1\}$, all~$h_j^0$ have colour~$H$ and all~$h_j^1$
have
colour~$Done$.

\item $h_i^0$ has colour~$H_l$ and~$h_i^1$ has colour~$H_r$.

\end{itemize}
We construct the layer~$\calL_{i+1}$ and connect it to the previous
layers such that at the end of that process the invariant is true.
First, produce
the vertex~$k_{i+1}$ with colour~$K'$. Connect the vertices
~$h_1^0,\dots,h_{i-1}^0$ and the vertex~$h_i^1$ to~$k_{i+1}$ by connecting
$H$-ports and $H_r$-ports to $K'$-ports, and relabelling 
$H_l \rightarrow H$ and $H_r \rightarrow Done$. Extend the clique consisting
of the vertices~$k_1,\dots,k_i$ by connecting
$K \rightarrow K'$ and $K' \rightarrow K$. Thus the conncections between
~$\calL_{i+1}$ and the previous layers have been established.

Next, we construct the rest of~$\calL_{i+1}$ using colours~$C$, $Done$, $B$,
$D$, $H_l$ and~$H_r$. Create vertices~$c_{i+1}^0$, $A_{i+1}^0$,
$b_{i+1,1}^0$, $b_{i+1,0}^0$, $d_{i+1}^0$ and~$h_{i+1}^0$ labelled with
colours~$C$, $Done$, $B$, $B$, $D$ and~$H_l$, respectively, and connect them as
needed. Repeat this procedure for vertices~$c_{i+1}^1$, $A_{i+1}^1$,
$b_{i+1,1}^1$, $b_{i+1,0}^1$, $d_{i+1}^1$ and~$h_{i+1}^1$ with the difference
that~$h_{i+1}^1$ obtains colour~$H_r$. Build the disjoint union of
these two subgraphs and the already constructed graph. Connect~$k_{i+1}$
to~$c_{i+1}^0$ and~$c_{i+1}^1$ by $K'\rightarrow C$. Relabel $K'\rightarrow K$
and $C\rightarrow Done$. This finishes the construction of~$\calL_{i+1}$. Note
that the invariant for the vertex labels is satisfied.

After all~$n$ layers have been built, relabel $H_l \rightarrow H$ and
create vertex~$s$ with colour~$K'$ (which is reused). Connect $K'\rightarrow K$
and $D\rightarrow K'$. Relabel $D\rightarrow Done$. It remains to add
vertices~$k_{n+1}$ and~$t$ to the graph.
Create~$k_{n+1}$ and~$t$ with colours~$C$ and~$D$.
Connect $H_r\rightarrow C$, $C\rightarrow D$, $D\rightarrow D$,
$K\rightarrow D$, $B\rightarrow K$, $B\rightarrow D$
and~$K'\rightarrow D$. This produces the top layer induced by~$k_{i+1}$
and~$t$ and establishes the edges between the first~$n$ layers and the
vertex~$s$ and the top layer. Note that we reused the colours~$D$,~$C$ and~$K'$.
It remains to count the colours. We used the nine colours~$Done$,
~$K$,~$K'$,~$C$,~$B$,~$D$,~$H_r$,~$H_l$ and~$H$. Hence, our claim holds.
\end{proof}

\subsection{Other rules}\label{subsec_other_rules}

The graph complexity of the other counterexamples constructed by Friedmann is
similar to the one
of the graphs~$\calG_n$. All of them can be decomposed into layers which are
connected in a simple
way, and they can be treated by the same techniques as in 
Theorems~\ref{thm_dpw_bounded},~\ref{thm_ent_bounded}
and~\ref{thm_cw_switch_all}. The results
are shown in Table~\ref{table_results}. For the 
case of snare memorisation we did not find any bound for the \cw, but all the
other measures (except \tw) have very small values on those graphs.

\begin{table}[H]
\begin{tabular}{l | c c c c c c }
Rule / Measure & tree & directed path & DAG & Kelly & \Ent & Clique \\
\hline
switch-all					& $\infty$ & 3 & 4 & 4 & 3 & 10
\\
switch-best					& $\infty$ & 3 & 4 & 4 & 3 & 18
\\
random-edge					& 8 & 3 & 4 & 4 & 3 & 12 \\
random-facet				& 3 & 1 & 2 & 2 & 1 & 6 \\
least-entered				& $\infty$ & $\infty$ & $\infty$ &
$\infty$ & $\infty$ & 9 \\
least-considered		& 7 & 3 & 4 & 4 & 4 & 7 \\
snare memory				& $\infty$ & 3 & 4 & 4 & 4 & ? \\
\end{tabular}
\caption{Upper bounds in different measures for the counterexample graph
classes.}
\label{table_results}
\end{table}

\bibliographystyle{eptcs}
\bibliography{simple_counterexamples_short}

\begin{thebibliography}{10}
\providecommand{\bibitemdeclare}[2]{}
\providecommand{\surnamestart}{}
\providecommand{\surnameend}{}
\providecommand{\urlprefix}{Available at }
\providecommand{\url}[1]{\texttt{#1}}
\providecommand{\href}[2]{\texttt{#2}}
\providecommand{\urlalt}[2]{\href{#1}{#2}}
\providecommand{\doi}[1]{doi:\urlalt{http://dx.doi.org/#1}{#1}}
\providecommand{\bibinfo}[2]{#2}

\bibitemdeclare{book}{AptGra11}
\bibitem{AptGra11}
\bibinfo{editor}{Krzysztof \surnamestart Apt\surnameend} \&
  \bibinfo{editor}{Erich \surnamestart Gr{\"a}del\surnameend}, editors
  (\bibinfo{year}{2011}): \emph{\bibinfo{title}{{Lectures in Game Theory for
  Computer Scientists}}}.
\newblock \bibinfo{publisher}{Cambridge University Press},
  \doi{10.1017/CBO9780511973468}.

\bibitemdeclare{misc}{BerwangerDawHunKreObd12}
\bibitem{BerwangerDawHunKreObd12}
\bibinfo{author}{Dietmar \surnamestart Berwanger\surnameend},
  \bibinfo{author}{Anuj \surnamestart Dawar\surnameend}, \bibinfo{author}{Paul
  \surnamestart Hunter\surnameend}, \bibinfo{author}{Stephan \surnamestart
  Kreutzer\surnameend} \& \bibinfo{author}{Jan \surnamestart
  Obdr\v{z}{\'a}lek\surnameend}: \emph{\bibinfo{title}{{The {D}{A}{G}-Width of
  Directed Graphs}}}, \doi{10.1016/j.jctb.2012.04.004}.
\newblock \bibinfo{note}{To appear}.

\bibitemdeclare{inproceedings}{BerwangerGra05}
\bibitem{BerwangerGra05}
\bibinfo{author}{Dietmar \surnamestart Berwanger\surnameend} \&
  \bibinfo{author}{Erich \surnamestart Gr{\"a}del\surnameend}
  (\bibinfo{year}{2005}): \emph{\bibinfo{title}{{Entanglement -- A Measure for
  the Complexity of Directed Graphs with Applications to Logic and Games}}}.
\newblock In: {\sl \bibinfo{booktitle}{{Proceedings of LPAR 2004, Montevideo,
  Uruguay}}}, {\sl \bibinfo{series}{{LNCS}}} \bibinfo{volume}{3452},
  \bibinfo{publisher}{Springer}, pp. \bibinfo{pages}{209--223},
  \doi{10.1007/978-3-540-32275-7\_15}.

\bibitemdeclare{article}{BerwangerGraKaiRab12}
\bibitem{BerwangerGraKaiRab12}
\bibinfo{author}{Dietmar \surnamestart Berwanger\surnameend},
  \bibinfo{author}{Erich \surnamestart Gr{\"a}del\surnameend},
  \bibinfo{author}{{\L}ukasz \surnamestart Kaiser\surnameend} \&
  \bibinfo{author}{Roman \surnamestart Rabinovich\surnameend}:
  \emph{\bibinfo{title}{{Entanglement and the Complexity of Directed Graphs}}}.
\newblock {\sl \bibinfo{journal}{Theoretical Computer Science}}.
\newblock \urlprefix\url{http://logic.rwth-aachen.de/~rabinovich/entangle.pdf}.
\newblock \bibinfo{note}{To appear}.

\bibitemdeclare{article}{CourcelleEngRoz93}
\bibitem{CourcelleEngRoz93}
\bibinfo{author}{Bruno \surnamestart Courcelle\surnameend},
  \bibinfo{author}{Joost \surnamestart Engelfriet\surnameend} \&
  \bibinfo{author}{Grzegorz \surnamestart Rozenberg\surnameend}
  (\bibinfo{year}{1993}): \emph{\bibinfo{title}{{Handle-Rewriting Hypergraph
  Grammars}}}.
\newblock {\sl \bibinfo{journal}{J. Comput. Syst. Sci.}}
  \bibinfo{volume}{46}(\bibinfo{number}{2}), pp. \bibinfo{pages}{218--270},
  \doi{10.1016/0022-0000(93)90004-G}.

\bibitemdeclare{inproceedings}{Fearnley10}
\bibitem{Fearnley10}
\bibinfo{author}{John \surnamestart Fearnley\surnameend}
  (\bibinfo{year}{2010}): \emph{\bibinfo{title}{{Non-oblivious Strategy
  Improvement}}}.
\newblock In \bibinfo{editor}{Edmund~M. \surnamestart Clarke\surnameend} \&
  \bibinfo{editor}{Andrei \surnamestart Voronkov\surnameend}, editors: {\sl
  \bibinfo{booktitle}{{LPAR (Dakar)}}}, {\sl \bibinfo{series}{{LNCS}}}
  \bibinfo{volume}{6355}, \bibinfo{publisher}{Springer}, pp.
  \bibinfo{pages}{212--230}, \doi{10.1007/978-3-642-17511-4\_13}.

\bibitemdeclare{misc}{Friedmann12b}
\bibitem{Friedmann12b}
\bibinfo{author}{Oliver \surnamestart Friedmann\surnameend}:
  \emph{\bibinfo{title}{{A Subexponential Lower Bound for Policy Iteration
  Based on Snare Memorization}}}.
\newblock
  \urlprefix\url{http://files.oliverfriedmann.de/papers/fearnley_lower_bound.p%
df}.
\newblock \bibinfo{note}{Submitted for publication}.

\bibitemdeclare{misc}{Friedmann12a}
\bibitem{Friedmann12a}
\bibinfo{author}{Oliver \surnamestart Friedmann\surnameend}:
  \emph{\bibinfo{title}{{A Subexponential Lower Bound for the Least Recently
  Considered Rule for Solving Linear Programs and Games}}}.
\newblock
  \urlprefix\url{http://files.oliverfriedmann.de/papers/least_recently_conside%
red_lower_bound.pdf}.
\newblock \bibinfo{note}{Submitted for publication}.

\bibitemdeclare{phdthesis}{FriedmannPhD}
\bibitem{FriedmannPhD}
\bibinfo{author}{Oliver \surnamestart Friedmann\surnameend}
  (\bibinfo{year}{2011}): \emph{\bibinfo{title}{{Exponential Lower Bounds for
  Solving Infinitary Payoff Games and Linear Programs}}}.
\newblock Ph.D. thesis, \bibinfo{school}{Ludwig-{}{M}aximilians-{}{U}niversity
  {M}unich}.
\newblock \urlprefix\url{http://edoc.ub.uni-muenchen.de/13294}.

\bibitemdeclare{article}{HunterKre08}
\bibitem{HunterKre08}
\bibinfo{author}{P.~\surnamestart Hunter\surnameend} \&
  \bibinfo{author}{S.~\surnamestart Kreutzer\surnameend}
  (\bibinfo{year}{2008}): \emph{\bibinfo{title}{{Digraph Measures: Kelly
  Decompositions, Games, and Orderings}}}.
\newblock {\sl \bibinfo{journal}{Theor. Comput. Sci.}}
  \bibinfo{volume}{399}(\bibinfo{number}{3}), pp. \bibinfo{pages}{206--219},
  \doi{10.1016/j.tcs.2008.02.038}.

\bibitemdeclare{article}{Jurdzinski98}
\bibitem{Jurdzinski98}
\bibinfo{author}{Marcin \surnamestart Jurdzi{\'n}ski\surnameend}
  (\bibinfo{year}{1998}): \emph{\bibinfo{title}{{Deciding the Winner in Parity
  Games is in {U}{P} ${}\cap{}$ co-{U}{P}}}}.
\newblock {\sl \bibinfo{journal}{Inf. Process. Lett.}}
  \bibinfo{volume}{68}(\bibinfo{number}{3}), pp. \bibinfo{pages}{119--124},
  \doi{10.1016/S0020-0190(98)00150-1}.

\bibitemdeclare{inproceedings}{Jurdzinski00}
\bibitem{Jurdzinski00}
\bibinfo{author}{Marcin \surnamestart Jurdzi{\'n}ski\surnameend}
  (\bibinfo{year}{2000}): \emph{\bibinfo{title}{{Small Progress Measures for
  Solving Parity Games}}}.
\newblock In \bibinfo{editor}{Horst \surnamestart Reichel\surnameend} \&
  \bibinfo{editor}{Sophie \surnamestart Tison\surnameend}, editors: {\sl
  \bibinfo{booktitle}{{STACS}}}, {\sl \bibinfo{series}{{LNCS}}}
  \bibinfo{volume}{1770}, \bibinfo{publisher}{Springer}, pp.
  \bibinfo{pages}{290--301}, \doi{10.1007/3-540-46541-3\\\_24}.

\bibitemdeclare{inproceedings}{JurdzinskiPatZwi06}
\bibitem{JurdzinskiPatZwi06}
\bibinfo{author}{Marcin \surnamestart Jurdzi{\'n}ski\surnameend},
  \bibinfo{author}{Mike \surnamestart Paterson\surnameend} \&
  \bibinfo{author}{Uri \surnamestart Zwick\surnameend} (\bibinfo{year}{2006}):
  \emph{\bibinfo{title}{{A Deterministic Subexponential Algorithm for Solving
  Parity Games}}}.
\newblock In: {\sl \bibinfo{booktitle}{{SODA}}}, \bibinfo{publisher}{ACM
  Press}, pp. \bibinfo{pages}{117--123}, \doi{10.1145/1109557.1109571}.
\newblock \urlprefix\url{http://dl.acm.org/citation.cfm?id=1109557}.

\bibitemdeclare{inproceedings}{JurdzinskiVoe00}
\bibitem{JurdzinskiVoe00}
\bibinfo{author}{Marcin \surnamestart Jurdzi{\'n}ski\surnameend} \&
  \bibinfo{author}{Jens \surnamestart V{\"o}ge\surnameend}
  (\bibinfo{year}{2000}): \emph{\bibinfo{title}{{A Discrete Strategy
  Improvement Algorithm for Solving Parity Games ({E}xtended Abstract)}}}.
\newblock In \bibinfo{editor}{E.~A. \surnamestart Emerson\surnameend} \&
  \bibinfo{editor}{A.~P. \surnamestart Sistla\surnameend}, editors: {\sl
  \bibinfo{booktitle}{{Computer Aided Verification, 12th International
  Conference, CAV 2000, Proceedings}}}, {\sl \bibinfo{series}{{LBCS}}}
  \bibinfo{volume}{1855}, \bibinfo{publisher}{Springer},
  \bibinfo{address}{Chicago, IL, USA}, pp. \bibinfo{pages}{202--215},
  \doi{10.1007/10722167\_18}.

\bibitemdeclare{inproceedings}{Obdrzalek03}
\bibitem{Obdrzalek03}
\bibinfo{author}{Jan \surnamestart Obdr\v{z}{\'a}lek\surnameend}
  (\bibinfo{year}{2003}): \emph{\bibinfo{title}{{Fast Mu-Calculus Model
  Checking when Tree-Width Is Bounded}}}.
\newblock In \bibinfo{editor}{Warren A.~Hunt \surnamestart Jr.\surnameend} \&
  \bibinfo{editor}{Fabio \surnamestart Somenzi\surnameend}, editors: {\sl
  \bibinfo{booktitle}{{CAV}}}, {\sl \bibinfo{series}{{LNCS}}}
  \bibinfo{volume}{2725}, \bibinfo{publisher}{Springer}, pp.
  \bibinfo{pages}{80--92}, \doi{10.1007/978-3-540-45069-6\_7}.

\bibitemdeclare{inproceedings}{Obdrzalek07}
\bibitem{Obdrzalek07}
\bibinfo{author}{Jan \surnamestart Obdr\v{z}{\'a}lek\surnameend}
  (\bibinfo{year}{2007}): \emph{\bibinfo{title}{{Clique-Width and Parity
  Games}}}.
\newblock In \bibinfo{editor}{J.~\surnamestart Duparc\surnameend} \&
  \bibinfo{editor}{T.~A. \surnamestart Henzinger\surnameend}, editors: {\sl
  \bibinfo{booktitle}{{CSL '07}}}, {\sl \bibinfo{series}{{LNCS}}}
  \bibinfo{volume}{4646}, \bibinfo{publisher}{Springer}, pp.
  \bibinfo{pages}{54--68}, \doi{10.1007/978-3-540-74915-8-8}.

\bibitemdeclare{inproceedings}{Schewe08}
\bibitem{Schewe08}
\bibinfo{author}{Sven \surnamestart Schewe\surnameend} (\bibinfo{year}{2008}):
  \emph{\bibinfo{title}{{An Optimal Strategy Improvement Algorithm for Solving
  Parity and Payoff Games}}}.
\newblock In \bibinfo{editor}{Michael \surnamestart Kaminski\surnameend} \&
  \bibinfo{editor}{Simone \surnamestart Martini\surnameend}, editors: {\sl
  \bibinfo{booktitle}{{CSL}}}, {\sl \bibinfo{series}{{LNCS}}}
  \bibinfo{volume}{5213}, \bibinfo{publisher}{Springer}, pp.
  \bibinfo{pages}{369--384}, \doi{10.1007/978-3-540-87531-4\_27}.

\end{thebibliography}

\end{document}